\definecolor{darkred}{RGB}{200, 0, 0}
\definecolor{darkblue}{RGB}{0, 0, 180}
\definecolor{darkgreen}{RGB}{50, 150, 0}
\definecolor{col2}{HTML}{64A857}
\definecolor{col3}{HTML}{D1603D}
\definecolor{orange}{rgb}{1,0.5,0}
\def\width{0.6}
\def\height{0.6}
\newcommand{\textsq}[3]
{
\node[align=center] at (#1, #2) {#3};
\draw [black] (#1 + \width, #2 + \height) rectangle (#1 - \width, #2 - \height);
}
\newtheorem*{rep@theorem}{\rep@title}
\newcommand{\newreptheorem}[2]{%
\newenvironment{rep#1}[1]{%
 \def\rep@title{#2 \ref{##1}}%
 \begin{rep@theorem}}%
 {\end{rep@theorem}}}
\newtheorem{Thm}{Theorem}
\newtheorem{Def}[Thm]{Definition}
\newtheorem{Crl}[Thm]{Corollary}
\newtheorem{lmm}[Thm]{Lemma}
\newtheorem{ptol}[Thm]{Protocol}
\newtheorem{hyp}[Thm]{Assumptions}
\def\diracspacing{0.7pt}
\newcommand{\bra}[1]{\langle #1 \hspace{\diracspacing} |} 
\newcommand{\ket}[1]{| \hspace{\diracspacing} #1 \rangle} 
\newcommand{\ketbra}[2]{| \hspace{\diracspacing} #1 \rangle \langle #2 \hspace{\diracspacing} |} 
\newcommand{\bramatket}[3]{\langle #1 \hspace{\diracspacing} | #2 | \hspace{\diracspacing} #3 \rangle} 
\newcommand{\bramatketq}[2]{\bramatket{#1}{#2}{#1}} 
\DeclareMathOperator{\tr}{tr}
\newcommand{\corefig}
{
	\draw[rounded corners, darkgreen, dashed, ultra thick] (-4.2, 4.6) -- (4.7, 4.6) -- (4.7, -2) -- (-4.2, -2) -- (-4.2, 4.6);
	\node[align=center] at (0, 4.1) {\textbf{Alice}};
%
	\textsq{-3.1}{0}{main\\device};
	\draw [->] (-3.1, 1.2) to (-3.1, 0.7);
	\node[align=center] at (-1.3,0.3){$\rho_B$};
	\draw [->] (-2.4,0) to (-0.35,0);
	\node[align=center] at (-3.1, 1.45) {$\theta$};
	\draw [->] (-3.1, -0.7) to (-3.1, -1.2);
	\node[align=center] at (-3.1, -1.4) {$x$};
	\draw (0.5, 0) ellipse (0.8 and 0.5);
	\node[align=center] at (0.5, 0) {switch};
	\draw [->, in=180, out=0] (1.45, 0) to (2.85, 2.5);
	\node[align=center] at (1.75, 0.40) {or};
	\node[align=center] at (1.75, -0.2) {or};
	\textsq{3.6}{2.5}{test\\device};
	\draw [->] (3.6, 3.7) to (3.6, 3.2);
	\node[align=center] at (3.6, 3.95) {$t$};
	\draw [->] (3.6, 1.8) to (3.6, 1.3);
	\node[align=center] at (3.6, 1.05) {$y$};
	\draw [->] (1.45, 0) to (6.65, 0);
	\textsq{7.4}{0}{\textbf{Bob}};
	\draw [->] (7.4, -0.7) to (7.4, -1.2);
}
\begin{document}

\title{Device-independence for two-party cryptography and position verification}

\author{Jérémy Ribeiro}
\affiliation{QuTech, Delft University of Technology, Lorentzweg 1, 2628 CJ Delft, The Netherlands}

\author{Le Phuc Thinh}
\affiliation{QuTech, Delft University of Technology, Lorentzweg 1, 2628 CJ Delft, The Netherlands}
\affiliation{Centre for Quantum Technologies, National University of Singapore, 3 Science Drive 2, Singapore 117543}

\author{Jędrzej Kaniewski}
\affiliation{QuTech, Delft University of Technology, Lorentzweg 1, 2628 CJ Delft, The Netherlands}
\affiliation{Centre for Quantum Technologies, National University of Singapore, 3 Science Drive 2, Singapore 117543}
\affiliation{Department of Mathematical Sciences, University of Copenhagen, Universitetsparken 5, 2100 Copenhagen, Denmark}

\author{Jonas Helsen}
\affiliation{QuTech, Delft University of Technology, Lorentzweg 1, 2628 CJ Delft, The Netherlands}

\author{Stephanie Wehner}
\affiliation{QuTech, Delft University of Technology, Lorentzweg 1, 2628 CJ Delft, The Netherlands}

\date{\today} 

\begin{abstract}
    Quantum communication has demonstrated its usefulness for quantum cryptography far beyond quantum key distribution. One domain is two-party cryptography, whose goal is to allow two parties who may not trust each other to solve joint tasks. Another interesting application is position-based cryptography whose goal is to use the geographical location of an entity as its only identifying credential. Unfortunately, security  of these protocols is not possible against an all powerful adversary. However, if we impose some realistic physical constraints on the adversary, there exist protocols for which security can be proven, but these so far relied on the knowledge of the quantum operations performed during the protocols. In this work we give  device-independent security proofs of two-party cryptography and Position Verification for memoryless devices under different physical constraints on the adversary. We assess the quality of the devices by observing a Bell violation and we show that security can be attained for any violation of the {Clauser-Holt-Shimony-Horne} inequality.
\end{abstract}

\maketitle

\tableofcontents 

\section{Introduction}

Quantum communication has demonstrated its usefulness for quantum cryptography far beyond quantum key distribution (QKD).
One domain is two-party cryptography (2PC), whose goal is to allow two parties Alice and Bob to solve joint tasks, while protecting an honest party against the actions of a malicious one.  Well-known examples of such tasks are oblivious transfer~\cite{Rabin81}, bit commitment, secure identification and private information retrieval.

Another interesting application is position-based cryptography (PBC) whose goal is to use the geographical location of an entity as its (only) credential. At the heart of these is the task of position-verification (PV) where a person wants to convince the (honest) verifiers that she is located at a particular location. Quantum protocols for PV that make use of quantum communication to enhance the security have been proposed~\cite{KMS, Malaney10,BCF10,TFKW13,RG15}. We will refer to such protocols as quantum position-verification (PV).\\

Unfortunately, one cannot achieve secure 2PC and PV
without making assumptions on the power of the adversary, even using
quantum communication~\cite{Lo1997,Mayers1997,BCF10}. This is in stark contrast to QKD where security against an all-powerful adversary (obeying the laws of physics) is attainable. The reason can be traced back to a key difference between the two scenarios: while in QKD Alice and Bob can cooperate to check on the actions of the eavesdropper, in 2PC they do not trust each other and need to fend for themselves.
\\

Nevertheless, due to the practical importance of 2PC one is willing to make assumptions in order to achieve security. Classically, one often relies on computational hardness assumptions such as the difficulty of factoring large numbers. However, as technology progresses the validity of such assumptions diminishes: it has been proven that factoring can be efficiently done on a quantum computer. Most significantly, an adversary can  retroactively break the security of a past execution of a cryptographic protocol~\cite{Maurer92b}. It turns out, however, that security can also be achieved from certain physical assumptions~\cite{Crepeau94,Crepeau92}. The advantage of these comes from the fact that physical assumptions only
need to hold during the course of the protocol. That is, even if the
assumption is invalidated at a later point in time, security is not compromised.\\

If we allow quantum communication, one possible physical assumption is 
the bounded quantum-storage model~\cite{Damgard2007,BRICS21886}, and more generally, the
noisy-storage model~\cite{O_F_S,Steph_1}. Here, the adversary is allowed to have an unlimited amount of classical storage, but his ability to store quantum information is limited. This is a relevant assumption
since reliable storage of quantum information is challenging. Significantly, however, security can always be achieved by sending more qubits than the storage device can handle. Specifically, if we assume that the adversary can store at most $r$ qubits, then security can be achieved
by sending $n$ qubits, where $r \leq n - O(\log n)$~\cite{O_F_S}, 
which is essentially optimal since no protocol can be
secure if $r \geq n$~\cite{M97,Lo_whyquantum}.  
The corresponding
quantum protocols require only very simple quantum states and
measurements -- and no quantum storage -- to be executed by the honest
parties, and their feasibility has been demonstrated
experimentally~\cite{ENGLWW15,NJCKW12}. 
It is known that the
noisy-storage model allows protocols for tasks such as oblivious
transfer, bit commitment, as well as position-based cryptography~\cite{KMS, BCF10, TFKW13, RG15, Malaney10}.\\

In all these security proofs, however, one assumes perfect knowledge
of the quantum devices used in the protocol. In other words, we know
precisely what measurements the devices make, or what quantum states
they prepare. Here, we present a general method to prove security for 2PC and PV, even if we only have limited knowledge of the
quantum devices. Chiefly, we assume that the quantum devices function
as black boxes, into which we can only give a classical input, and record a classical output. The classical input indicates the choice of a measurement that we would wish to perform, although we are not guaranteed that the device actually performs this measurement. The classical output can be understood as the outcome of that measurement. The classical processing itself is assumed to be trusted. This idea of imagining black box devices is
known as device-independent (DI) quantum cryptography~\cite{Mayers2004,Acin07,Barrett12,Ekert91}. There is a large body of work in DI
QKD (see e.g.~\cite{VV12,BCK12,RUV12}), but in contrast
there is hardly any work in DI 2PC. A protocol has been
proposed by Silman~\cite{Silman11} for bit commitment which does not make physical assumptions, and hence only
achieved a weak primitive. First steps towards DI PV have also been made in~\cite{Kent2011}, and for one-sided
DI QKD in~\cite{TFKW13}.
\\

Achieving DI security for 2PC and PV presents us new with challenges which require a different approach than what is known from QKD~\cite{KW16}.
\begin{enumerate}

\item In QKD Alice and Bob trust each other, while Eve is an eavesdropper
    trying to break the protocol. As in DI QKD we will assume that the devices 
    used in the protocol are made by the dishonest party.

%
%
\item In QKD, after Eve has prepared and given the devices -- which she might be entangled with -- to Alice 
and Bob, there is no more direct communication between them and Eve. On the contrary in two party cryptography, 
the dishonest party, who prepared the devices, will receive back \emph{quantum communication} from these 
devices. This feature leads to different security analysis between DI QKD and DI 2PC, and also requires us 
to develop new proof techniques.

%
\end{enumerate}

\begin{table*}[ht]
\centering
    \begin{tabular}{|c|c|c|}
    \hline
    & \thead{~\cite{KW16}} & \thead{This paper}\\

     \hline
     Bound on $P_{\rm guess}$ (cf. \eqref{eq:P_guess}) & \makecell{$d \left(\frac{1}{2}+\frac{1}{2}\sqrt{\frac{1+\zeta}{2}} \right)^n$} & \makecell{$\big(1-o(1)\big)\cdot \sqrt{d} \left(\frac{1}{2}+\frac{1}{2}\sqrt{\frac{1+\zeta}{2}} \right)^n $} \\ \hline
     Adversary memory & \makecell{reduction to classical adversary} & \makecell{deals with the memory directly} \\ 
     \hline
     Jordan's Lemma & not used & \makecell{reduction of dimensionality \\ thanks to Jordan's Lemma}\\ \hline
     \makecell{Absolute effective\\  anti-commutator} & \makecell{used} & \makecell{used}\\ 
     \hline
    \end{tabular}
    \caption{Comparison of the proof techniques used in~\cite{KW16} with those of this paper. Our new work relates the security directly to the entanglement
cost of the adversary's storage channel, however, we borrow concepts on how to test our quantum devices from the earlier work. 
Security is possible whenever $P_{\rm guess} \leq 2^{- \alpha n}$ (see equation \eqref{eq:P_guess}) for some $\alpha > 0$, which depends on the dimension $d$ of the adversary's storage device as well 
as the parameter $\zeta$ estimated during the Bell test.
Our new analysis allows to prove security for a storage device that
is at least {twice as large as the one allowed by the previous results.}}
    \label{Table}
\end{table*}

In this paper, we present a general method for proving the device-independent security of two-party cryptography and position-verification. Specifically, we improve the device-independent security proof of 2PC (which implies security for PV) in the noisy-storage model (or noisy-entanglement model) given in~\cite{KW16} thanks to new techniques. We accomplish this by introducing an appropriate model for DI in these settings, and subsequently studying a general "guessing game" that can be related to both tasks. 
To obtain DI security, we perform a Bell test on a subset of the quantum systems used in the protocol. It is an appealing feature of our analysis that security can be attained for any violation of the {Clauser-Holt-Shimony-Horne (CHSH)
inequality}~\cite{CHSH69}.

A previous analysis~\cite{KW16} permitted us to prove a bound on the cheating probability proportional to the dimension of the adversary's quantum storage (see Table \ref{Table}). To do so, they first reduce the dishonest party to a classical adversary thanks to an entropy inequality. Then they use the absolute effective anti-commutator to prove some uncertainty relations and finally lower bound some min-entropy (which is equivalent to upper-bound the cheating probability).

Here we deal directly with a quantum adversary, which permits us to prove security for an adversary quantum memory that is at least twice as large as in the previous analysis. To overcome the difficulties induced by dealing directly with the adversary quantum memory we had to use different tools (see Table \ref{Table}). While the adversary can be fully general during the course of the protocol, we assume in this work that the devices he prepared earlier are memoryless, which means that the devices behave in the same manner every time they are used. {By analogy to classical random variables such devices are often referred to as i.i.d.~devices (which stands for independent and identically distributed)}.

\subsection{Weak String Erasure}

To analyse 2PC protocols, we focus on a simpler primitive called Weak String Erasure (WSE)~\cite{Steph_1}. 
WSE is a two-party primitive such that if Alice and Bob are honest then at the end of its execution Alice holds a random bit string $x \in \{0,1\}^n$ and Bob holds a random substring $x_{\mathcal{I}}$ of $x$ where $\mathcal{I}$ is {a random subset of $\{1, 2, \ldots, n\}$.}
WSE is secure for honest Bob if Alice cannot guess the set $\mathcal{I}$ better than random chance, and for honest Alice if it is hard for Bob to guess the entire Alice's string \emph{i.e.}~if the probability that $X=\tilde X$ is low, where $X$ is the random variable corresponding to Alice's output measurement and $\tilde X$ is the random variable corresponding to Bob's guess, that is
\begin{align}
&\exists \alpha>0:\ H_{\min} (X|\text{Bob}) \geq \alpha n \notag \\
\Leftrightarrow\ & \exists \alpha>0:\ P_{\text{guess}}(X|\text{Bob})=2^{-H_{\min} (X|\text{Bob})} \leq 2^{-\alpha n}. \label{eq:P_guess}
\end{align}
For a more formal security definition of ($\alpha,\epsilon$)-WSE see~\cite{Steph_1}.

One possible implementation of WSE~\cite{Steph_1} in case of {honest parties and trusted devices is as follows}. Alice prepares $n$ EPR entangled pairs, measures randomly half of all the pairs in BB84~\cite{BB84} bases $\theta \in \{0,1\}^n$ and gets $x \in \{0,1\}^n$. At the same time, she sends the other half to honest Bob who measures it in some random bases $\theta' \in \{0,1\}^n$ and gets $z \in  \{0,1\}^n$. As Bob does not know $\theta$, he has measured some of his states in the wrong basis, so the outcome bits corresponding to these measurements provide no information about Alice's outcome. At this stage, Bob does not know which of his measurement were done in the good basis and which were done in the wrong one. After Alice and Bob have waited for a duration $\Delta t$, Alice sends $\theta$ to Bob. Bob can now compare $\theta$ with $\theta'$ and deduce the set $\mathcal{I}:=\{k \in \{0,\ldots,n\}: \theta_k= \theta'_k\}$ of indexes where Bob's bases are the same as Alice's ones. For these indexes we have $z_k=x_k$ and Bob erases all the other bits. At this stage Bob holds $(\mathcal{I},x_{\mathcal{I}})$, where $x_\mathcal{I}$ is the substring of $x$ corresponding to the set $\mathcal{I}$.

In the device-independent version of the protocol Alice holds two devices: the main device and the testing device. Alice uses the main device to prepare and measure states, and the testing device to measure states. In the honest scenario, Alice first tests her devices by proceeding to a Bell test following Protocol \ref{Test_ptol} (in section \ref{generality}), i.e. Alice checks that the states produced and measurements performed by the main device can be used to violate the CHSH inequality. Then Alice and Bob proceed as in the trusted device protocol.

It is easy to check that WSE is secure for honest Bob, since {the set $\mathcal{I}$ is determined by $\theta \oplus \theta'$ ($\oplus$ is the bitwise addition modulo two) and $\theta'$ is chosen uniformly at random by Bob. As he does not give Alice any information about it, the probability of Alice successfully guessing} $\mathcal{I}$ is $2^{-n}$.

In the dishonest Bob scenario, we can assume that it is Bob who created Alice's devices to gain extra information and compromise Alice's security. Consequently, at the very beginning of the protocol, Alice needs to test her devices. She then uses the device $n$ times to produce a bipartite state $\rho_{AB}=\sigma_{AB}^{\otimes n}$ (i.i.d.~assumption), where $\sigma_{AB}$ is an unknown but fixed state, measures the $\rho_{A}$ part to get $x\in \{0,1\}^n$ and sends the $\rho_B$ part to Bob. Bob can proceed to any kind of operation \emph{not necessarily i.i.d.} on $\rho_B$ and stores the outcome for the duration $\Delta t$ to get a cq-state $\rho_{KB'}$. When he receives $\theta$ from Alice he performs a general measurement on his cq-state which produces the guess $\tilde x$. {Bob's cheating is considered successful if} $\tilde x =x$. However, as his quantum storage is assumed to be bounded (or noisy) which impose a restriction on the possible state Bob can hold, and permits us to show that,

\begin{itemize}
  \item \textbf{WSE is secure:} for Alice against dishonest Bob who holds a noisy storage device and is allowed to create the honest party's devices (but these devices have to be memoryless), and for Bob against dishonest Alice. 
\end{itemize}

The detailed Weak String Erasure protocol is presented in section \ref{subsec:WSE} (Protocol \ref{ptol:WSE}). The precise formal result is presented in the Lemma~\ref{NE-WSE_Lemma} in section~\ref{subsec:WSE}. This result is derived from the technical result informally presented in the method section \ref{sec:Methods}.

\subsection{Position Verification}

PV has three protagonists in the honest scenario, namely two verifiers $V_1$ and $V_2$ and one prover $P$. For simplicity we restrict to position verification in one spatial dimension. The prover claims to be at some geographical position, and the PV protocol permits to prove whether this is true. The protocol is then secure if the probability that one or more dishonest provers impersonate a prover in the claimed position decays exponentially with the number of qubits {exchanged in the protocol}. \\

When the devices are trusted and the prover is honest, we can implement the protocol as follows. $V_1$ prepares $n$ EPR entangled pairs, measures half of all the pairs in some bases $\theta \in \{0,1\}^n$ to get $x \in \{0,1\}^n$, and sends the other half to the prover $P$. $V_2$ sends $\theta$ to $P$; this random string can be preshared between the verifiers before the protocol begins. When the prover receives all the information, he measures the halves of the EPR pairs he received in the bases $\theta$ to get $x$ and sends it back to both verifiers. The verifiers then check whether the prover's answer is correct, and measure the time it took between the moment they sent information and the moment they receive the answer from the prover. If the answer is correct and if the prover replies within a predefined time $\Delta t$, then {the execution of the protocol is considered successful}.\\

The honest execution of the device-independent version of the protocol is the same except that $V_1$ starts with a Bell test of his main device using the testing device. Then he executes the exact same steps as described above. The detailed protocol is given as Protocol \ref{ptol:PV}  in section \ref{subsec:DIPV}.

\begin{figure}[ht]
  \center
  \begin{tikzpicture}[scale=1.25]
    \draw[<->, thick] (6,-3) node[right]{$x$} -- (-2,-3) -- (-2,4.5) node[above] {$t$};
    \foreach \x/\xtext in {-1/V_1, 2/P,5/V_2}
        \draw (\x,-3) -- (\x,-2.9) node[below=2pt]{$\xtext$};
    \draw[rounded corners] (-1.65,-2.6) rectangle (-0.35,-1.6); 
    \node[align=center] at (-1,-2.1){\small $\mathcal{\bar M}_{V_1}\otimes \mathds{1}_P$};
    \draw[rounded corners] (5.5,-2.6) rectangle (4.5,-1.6); 
    \draw[->] (-1,-2.9)-- (-1,-2.75)node[right]{\small $\theta$} -- (-1,-2.6) ;
    \draw[->] (-1,-1.6) -- (-1,-1.3) node[above]{\small $x$};
    \draw[->,dashed] (-0.4,-1.6) -- (1.5,0.4)
        node[pos=0.5,left]{$\rho_P$};
    \draw[rounded corners] (1.5,0.4) rectangle (2.5,1.4);
    \node at (2,0.9){$\mathcal{M}_P$};
    \draw[->] (4.5,-1.6) --(2.5,0.4)
        node[pos=0.5,right]{\small $\theta$};
    \draw[->] (1.5,1.4) -- (-0.5,3.4)
        node[pos=0.5,right]{\small $y$};
    \draw[->] (2.5,1.4) -- (4.5,3.4)
        node[pos=0.5,left]{\small $y$};
    \draw[rounded corners] (-0.5,3.4) rectangle (-1.5,4.4);
    \node[] at (-1,3.9){$\checked$};
    \draw[rounded corners] (4.5,3.4) rectangle (5.5,4.4);
    \node[] at (5,3.9){$\checked$};
    \draw[|-|](-2.5,-2.1) -- (-2.5,4)
        node[pos=0.5,left]{$\Delta t$}; 
    \end{tikzpicture}
  \caption{$V_1$ uses $\theta$ as an input to his device, which
    creates a bipartite state $\rho_{AP}$ and sends the part $\rho_P$
    to the prover $P$, and measures the other part $\rho_A$ to produce $x\in\{0,1\}^n$ as output. At the same time $V_2$ sends
    $\theta$ to $P$. When $P$ receives the state and $\theta$ he makes a
    measurement on the state and obtains $y \in \{0,1\}^m$. He sends
    $y$ to both verifiers. The verifiers check if $y=x$ (or if $y$ is
    "close enough" to $x$), and measure the time it took to get back
    an answer from $P$ }
  \label{fig_PV}
\end{figure}
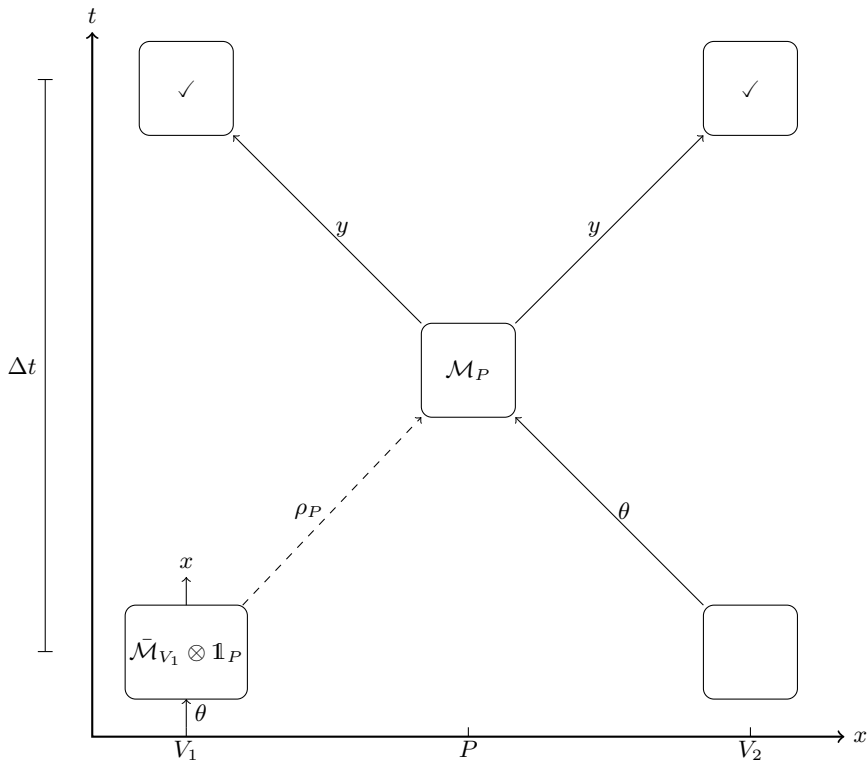

In the dishonest scenario, a single prover cannot cheat because he cannot reply on time to both verifiers. More than one dishonest prover is required and, without loss of generality, we can consider at most two dishonest provers whose goal is to impersonate one honest prover who would be at the claimed position. In this case there exists a general attack on the protocol~\cite{BCF10}. This attack, however, requires an exponential amount of entanglement with respect to the amount of quantum information received from the
verifiers. Hence, it is natural to ask if security is possible when the adversaries hold a limited amount of entanglement. We will work in this framework of cheating provers.

As security of PV can be reduced to the security of WSE, we prove that
\begin{itemize}
  \item \textbf{PV is secure}
    against adversaries who share a "noisy" entangled state and who cannot use quantum communication but
    are allowed to create the honest party's devices (these devices
    have to be memoryless).
\end{itemize}

The precise formal result is presented in Lemma \ref{bound_NE}, and this follows from a technical result informally presented in the next section.

\subsection{Methods} \label{sec:Methods}

\begin{figure}[ht]
    \includegraphics[scale=0.7]{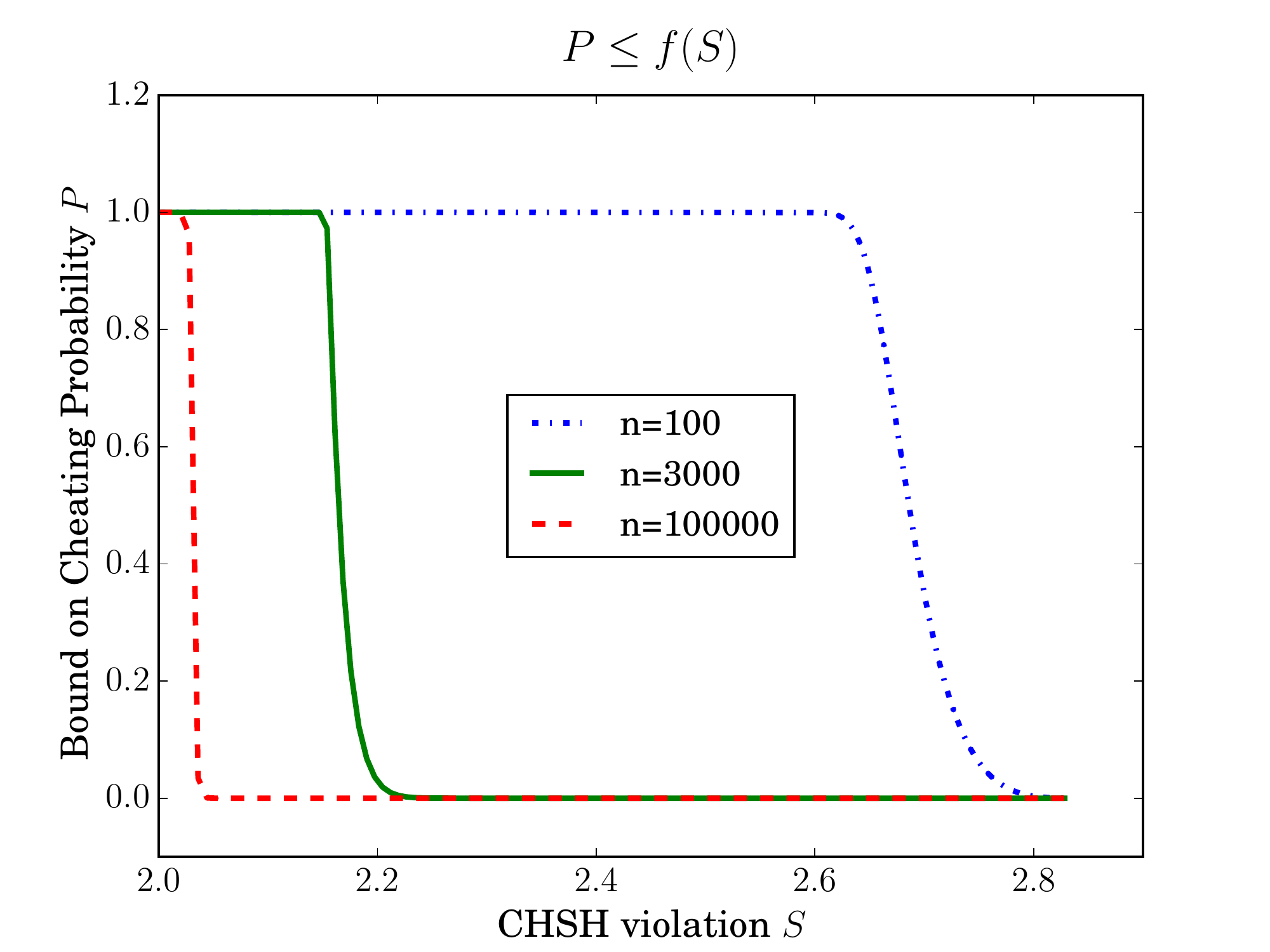}
    \caption{Security is possible for any violation of the CHSH inequality, but depending on the violation we need to send a larger number of qubits $n$.}
    \label{fig:plot}
\end{figure}

In order to prove DI security for Weak String Erasure and Position Verification, we analyse a related task known as the post-measurement guessing game. This is a two-player game where Alice plays against Bob. Alice inputs a bit string into her main device and receives an output string; Bob wins the game if he guesses correctly the output of Alice's device given his knowledge of Alice's input.\\

\begin{figure}[ht]
\begin{tikzpicture}[scale=1, line width=0.8]
	\corefig
	\node[align=center] at (7.4, -1.4) {$\tilde x$};
\end{tikzpicture}
\caption{Alice is in possession of two devices prepared by Bob: The "main device" permits Alice to prepare a bipartite state $\rho_{AB}$ and measure the $\rho_A$ part of it according to a list of bases $\theta \in \{0,1\}^{n}$. The "testing device" measures according to a list of bases $t \in \{0,1\}^{m}$. These two devices are assumed to be memoryless (or i.i.d.). A dishonest Bob is assumed to be only limited by the dimension of his quantum memory, so he is allowed to make arbitrary measurements on states of dimension at most $d$. \newline \newline
 The main device prepares a bipartite state $\rho_{AB}=\sigma_{AB}^{\otimes n}$ (the tensor form follows the i.i.d.~assumption), one part $\rho_A$ is measured by the main device, {with the measurement settings specified by a random bit} string $\theta \in \{0,1\}^n$, to produce the bit string $x \in \{0,1\}^n$. The other part $\rho_B$ of the state is sent to a switch that Alice controls. As the devices are memoryless, Alice can first test her devices, and so sets her switch such that the system $B$ is sent to the testing device. She then {repeatedly performs the CHSH test} to estimate the violation. After that she sets her switch so that the system $B$ is sent to Bob. Bob's goal is to guess Alice's output $x$, i.e.~he wants to achieve $\tilde x=x$.
}
\label{protocol_scheme}
\end{figure}
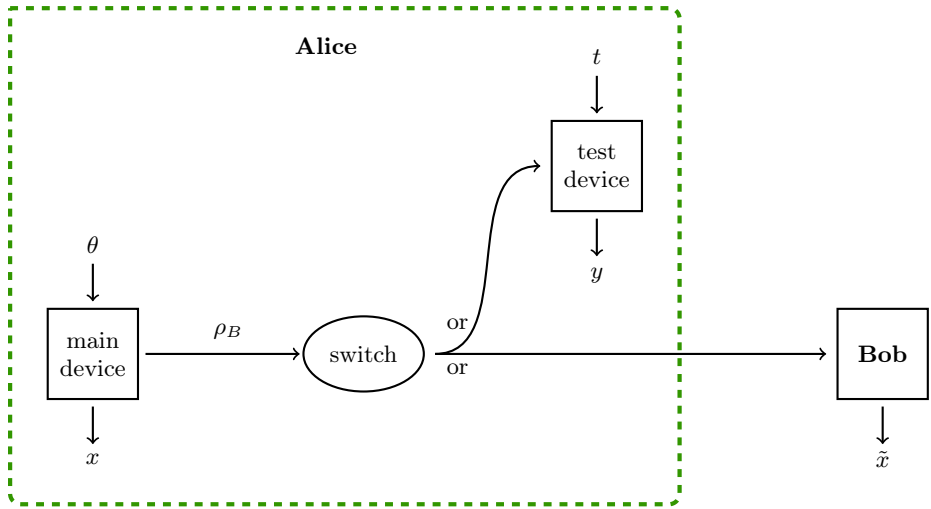

In the DI version, Alice demands that she has another test device different from her main device and dishonest Bob is allowed to create these two devices of Alice  (Fig.~\ref{protocol_scheme}). Alice can use these two devices to perform a Bell test (CHSH game), which certifies the quality of the devices. Having tested her devices, Alice uses the main device to prepare a bipartite (arbitrary) state  and measures half of it by inputting $\theta \in \{0,1\}^n$ in her main device, gets an outcome $x \in \{0,1\}^n$, and sends the other part of the quantum system to Bob. Later she sends him the input she used to perform her measurements. Once Bob has received all information he has to guess Alice's measurement outcome $x$.\\

To find a bound on Bob's winning probability, we have to assume that Bob has limited quantum storage or else he wins with certainty: he would just have to store the quantum system until he receives the bases $\theta$ and then he can measure his system in those bases. As a first step towards security against fully uncharacterized devices, we assume for now that all devices used by Alice are memoryless or i.i.d., so they behave in the same way each time Alice uses them. This implies that Alice's measurement operators are a tensor product of binary measurement operators, and the state she prepares is also of product form. This memoryless assumption also permits Alice to perform the Bell test before the actual guessing game, and from this test, to estimate an upper bound $\zeta:=\frac{S}{4}\sqrt{8-S^2}$ on a quantity we call the effective absolute anti-commutator of Alice's measurement denoted $\epsilon_+$~\cite{KW16}, where $S$ is the left hand side of the CHSH inequality. {Since $\epsilon_{+}$ is always larger than the effective anti-commutator, one can show that it gives rise to strong uncertainty relations}~\cite{KTW14}.\\

{Despite the memoryless assumption (on Alice side only), the problem remains hard}. Indeed, we cannot use techniques coming from DI QKD, since in QKD the honest parties do not send back quantum information to the eavesdropper, in contrast to the guessing game. The analysis must be different. As we do not know what Alice's measurements are, there is no limitation on the dimension on which Alice's devices act, so we cannot use bounds depending on the dimensionality of Alice's states or measurements. Moreover we have to express the absolute anti-commutator $\epsilon_+$ of Alice's measurement, in {a way that allows us to relate} it to Bob's guessing probability. In the previous work on DI WSE~\cite{KW16}, the authors reduced the problem to proving security against a classical adversary (see Table \ref{Table} for a more detailed comparison between the papers). This reduction leads to a bound which is proportional to $d$, the dimension of the adversary's quantum memory. To improve this bound we must deal with Bob's measurement, which are fully general though acting on a space of dimension at most $d$.

We overcome these difficulties thanks to Jordan's Lemma~\cite{NWZ09,Reg06}, which permits to block diagonalize Alice's measurement and reduces the dimensionality of these measurements into a list of qubit measurements. The price to pay is that we lose the "identically distributed" part of the i.i.d.~assumption on these qubit measurements. Jordan's Lemma permits us to express the absolute effective anti-commutator in an adapted form, such that we can link it to the guessing probability of Bob. Finally we prove the following.
\begin{itemize}
\item \textbf{Main technical result:} Assuming Alice's devices are memoryless, and Bob has a noisy storage device, there is a DI upper-bound on the success probability of Bob in the guessing game, which decays exponentially in $n$, the length of Alice's measurement outcomes $x \in\{0,1\}^n$ which coincides here with the number of qubits exchanged in the honest execution protocol. This bound holds for \textit{any} CHSH violation, \emph{i.e.}~$\forall S\in ]2,2\sqrt{2}]$ (see Fig.~\ref{fig:plot}).
\end{itemize}
The precise formal statement is given in Theorem \ref{main_Thm}.\\

From this result follows the DI security of WSE and PV. Indeed any attack on WSE can be viewed as a guessing game where Bob tries to guess Alice's complete string $x$. Likewise in the case of PV we can see any attack as a guessing game: the dishonest provers have to guess $V_1$'s outcome, and one can map the operations they used to the guessing game and hence show that these operations would permit Bob to win the guessing game. This implies that the cheating probability in PV is lower than that of the guessing game. This statement has been shown in ~\cite{RG15} (the remapping was done between attacks on PV and attacks on WSE, but it is essentially the same since any attack on WSE can be seen as a guessing game).

\section{Device-Independent Guessing Game}

\subsection{Preliminaries}
\label{generality}
\subsubsection{Notation}
We denote $\mathcal{H}_A$ the Hilbert space of the system $A$ with
dimension $|A|$ and $\mathcal{H}_{AB}:= \mathcal{H}_A \otimes
\mathcal{H}_B$ the Hilbert space of the composite system, with $\otimes$ the tensor product. By $\mathcal{L}(\mathcal{H})$, $\mathcal{S}_a(\mathcal{H})$, $\mathcal{P}(\mathcal{H})$ and $\mathcal{S}(\mathcal{H})$ we mean the set of linear, self-adjoint, positive semidefinite and (quantum) density operators on $\mathcal{H}$, respectively. For two operators $A,B \in
\mathcal{S}_a(\mathcal{H})$, $A \geq B$ means $(A-B) \in
\mathcal{P}(\mathcal{H})$.  If $\rho_{AB} \in \mathcal{S}(\mathcal{H}_{AB})$ then
we denote $\rho_A:=\tr_B(\rho_{AB})$ and $\rho_B:=\tr_A(\rho_{AB})$ to be the respective reduced states. A measurement, or POVM for positive operator valued measure, of dimension $d$ is a set of positive semidefinite operators that adds up to the identity operator on dimension $d$, namely
\begin{align*}
  \mathcal{F}=\bigg\{F_x, x \in \mathcal{X}: F_x \in
  \mathcal{P}(\mathcal{H}) \,\mathrm{and}\, \sum_x F_x =\mathds{1}_d\bigg\}.
\end{align*}

For $M \in \mathcal{L}(\mathcal{H})$, we denote $|M|:=\sqrt{M^\dagger M}$ and the Schatten $p$-norm $\|M\|_p:=\tr(|M|^p)^{1/p}$ for $p \in [1,\infty[$. Norms without subscript will mean the Schatten $\infty$-norm (also known as the 
operator norm): $\|M\|:=\|M\|_{\infty}$, which is the largest singular value of $M$; if $M \in \mathcal{P}(\mathcal{H})$ then $\|M\|$ is the highest eigenvalue of $M$. Some useful properties of the operator norms are $\|L\|^2=\|L^\dagger L
\|=\|L L^\dagger\|$ for all $L\in \mathcal{L}(\mathcal{H})$ and if $A,B\in \mathcal{P}(\mathcal{H})$ such that $A\geq
B$ then $\|A\| \geq \|B\|$. Moreover, whenever $A,B,L \in \mathcal{L}(\mathcal{H})$
and $A^\dagger A \geq B^\dagger B$ then $\|AL\| \geq \|BL\|$
\cite[Lemma 1]{TFKW13}.

Vector $p$-norms induce the corresponding operator $p$-norms, which we denote as $\|\cdot\|_p^{\mathrm{I}}$ to distinguish them from Schatten $p$-norms. They are defined as
\begin{equation}
    \|A\|_p^{\mathrm{I}} = \sup_{x\neq0} \frac{\|Ax\|_p}{\|x\|_p}.
\end{equation}
In the proof of a technical Lemma in the appendix, we will need the induced $1$-norm and $\infty$-norm
\begin{align}
    \|A\|_1^{\mathrm{I}} = \max_{1\leq j\leq n} \sum_{i=1}^m |a_{ij}| \quad \quad \textup{and}\quad \quad \|A\|_\infty^{\mathrm{I}} = \max_{1\leq i\leq m} \sum_{j=1}^n |a_{ij}|
\end{align} which can be seen as the maximum absolute column sum and maximum absolute row sum, respectively, and where $m$ and $n$ are the maximum row and column indexes respectively. Note that the induced $2$-norm and the operator norm are the same $\|\cdot \|_2^I=\|\cdot\|$.

For a bit string $x\in\{0,1\}^n$, $|x|$ denotes its length $n$ and the Hamming weight $w_{H}(x)$ is the number of 1's in $x$. For $x,y \in \{0,1\}^{n}$ the Hamming distance is defined as $d_{H}(x,y):= w_{H}(x \oplus y)$.

If $\mathcal{I}$ is a subset of $[n]$ then by $x_{\mathcal{I}}$ we mean the substring of $x$ with indices $\mathcal{I}$.

$E_{\textup{C,LOCC}}^{(1)}(\rho_{AB})$ is the one shot entanglement cost to create a bipartite state $\rho_{AB}$ from a maximally entangled state using only local operations and classical communication. It is formally defined as
\begin{align}
    E_{\textup{C,LOCC}}^{(1)}(\rho) :=  \min_{M,\Lambda} \{\log(M): \Lambda(\Psi^{\bar A \bar B}_M)=\rho_{AB}, \Lambda \in \textup{LOCC}, M \in \mathbb{N}\}
    \label{def_E_C},
\end{align}
where $\Psi^{\bar A \bar B}_M$ is a maximally entangled state of dimension $M$
\begin{align}
    \Psi^{\bar A \bar B}_M:= |\Psi^{\bar A \bar B}_M\rangle\langle\Psi^{\bar A \bar B}_M|, \quad \quad \quad |\Psi_M^{\bar A \bar B}\rangle := \frac{1}{\sqrt{M}} \sum_{i=1}^M |i^{\bar A}\rangle|i^{\bar B}\rangle.
\end{align}
Similarly, we have $E^{(1)}_C(\mathcal{E})$~\cite[Definition 10]{BBCW15} is the one shot entanglement cost to simulate a channel $\mathcal{E}: \mathcal{L}(\mathcal{H}_A) \rightarrow \mathcal{L}(\mathcal{H}_B)$ using LOCC and preshared entanglement:
\begin{align}
    E_C^{(1)}(\mathcal{E}):= \min_{M, \Lambda}\{\log(M): \forall \rho_A\in \mathcal{L}(\mathcal{H}_A),\ \Lambda(\rho_A \otimes \Psi^{\bar A \bar B}_M)=\mathcal{E}(\rho_A)\}
\end{align}
 where $\Lambda$ is a LOCC with $A \bar A \rightarrow 0$ (no output) on Alice's side and $\bar B \rightarrow B$ on Bob's side, and $M \in \mathbb{N}$. Note that we require a single LOCC map to simulate the effect of the channel $\mathcal{E}$ so $\Lambda$ must be independent of $\rho_A$.
 
\subsubsection{Models and assumptions}
In this section we explain in detail the assumptions imposed on the model, which are motivated by considerations on the WSE and PV protocols and our i.i.d.~constraint. 
\begin{hyp}
These are the assumptions on our device-independent guessing game:
\begin{enumerate}
  \item In device-independent protocols, the security cannot rely on the knowledge we have about the devices used by the honest party (the inner workings are unknown). These devices may even be maliciously prepared by the dishonest party to compromise security. Thus in this context, Bob is allowed to create the two devices of Alice: the main device and the testing device. These devices are assumed to be memoryless (or i.i.d.), which means that they behave in the same way every time Alice uses them. In other words, the measurements made by the devices in one round of usage depend only on Alice's input in this round (and not on previous rounds), and the state $\rho_{AB}=\sigma_{AB}^{\otimes n}$ created by her device has a tensor product form where $\sigma_{AB}$ may be chosen by Bob. The testing device is used in the testing protocol~\ref{Test_ptol}. 
   \item When Bob receives his state $\rho_{B}$ from Alice, we allow him to perform any quantum operation on it. After the operation the global state can be written as $\rho_{AB'K}$ where Alice's part $\rho_A$ has a tensor product form, and $\rho_{B'K}$ is an arbitrary qc-state held by Bob such that $|B'|\leq d$ (see assumptions \ref{Phys_assump}).
   \item Alice can test her devices before using them in the protocol as they are memoryless. We describe the testing procedure in detail in the following Protocol~\ref{Test_ptol}.
\end{enumerate}
\label{hyp}
\end{hyp}

The testing procedure aims to estimate how much the two binary measurements made by Alice's main device are incompatible given the prepared state. This is accomplished by measuring how much the main and test devices can violate the CHSH inequality.
\begin{ptol}
  Let $A_0,A_1$ be the two binary observables of Alice's main measurement device, and $T_0,T_1$ be the two binary observables of her testing device.
  \begin{enumerate}
    \item Alice creates a bipartite state $\rho_{AB}$ using her
      main device. 
    \item She sends the $B$ subsystems in state $\rho_B$ to her testing device and statistically
      estimates $S:=\tr(W \rho_{AB})$, where $W$ is the CHSH operator
      defined as
      \begin{equation}
        W:=A_0\otimes T_0 +A_0\otimes T_1 + A_1 \otimes T_0 - A_1\otimes T_1.
      \end{equation}
  \end{enumerate} \label{Test_ptol}
\end{ptol}
 The following Lemma~\ref{Bound_Eps_+} shows that this testing
 procedure permits Alice to estimate the absolute effective anti-commutator
 defined as follows.
\begin{Def}
  For two binary measurements with POVM elements $\{P_0^0,P_1^0\}$ and $\{P_0^1,P_1^1\}$, we define the absolute effective anti-commutator
  \begin{align}
    \epsilon_+:= \frac{1}{2} \tr(|\{A_0,A_1\}| \rho_A)
  \end{align}
  where $A_0:=P_0^0-P_1^0$ and $A_1:=P_0^1-P_1^1$.
  \label{Def_Eps_+}
\end{Def} 
\begin{lmm}[Proposition 2 of~\cite{KW16}]
  Let $\rho_{AT} \in \mathcal{S}(\mathcal{H}_{AT})$ and let $A_0,A_1$ and $T_0,T_1$ be binary observables on subsystem $A$ and $T$, respectively, achieving $\tr(W \rho_{AT})=:S$ for $S\geq 2$ with $W$ being the \textup{CHSH} operator. The absolute effective anti-commutator on Alice's side satisfies
  \begin{align}
    \epsilon_+ \leq \frac{S}{4} \sqrt{8-S^2}=:\zeta \in [0,1].
  \end{align}
  \label{Bound_Eps_+}
\end{lmm}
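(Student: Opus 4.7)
My plan is to derive the bound purely from the fact that all four observables square to the identity, combined with two applications of Cauchy--Schwarz and the operator concavity of the square root. The algebraic identity that does most of the work is
\begin{align*}
\{A,B\}^2 + |[A,B]|^2 \;=\; 4\,\id
\end{align*}
valid for any Hermitian $A,B$ with $A^2 = B^2 = \id$: expanding $(AB\pm BA)^2$, the cross terms collapse to $\pm 2\,\id$, and since $[A,B]^\dagger = -[A,B]$ one has $|[A,B]|^2 = -[A,B]^2$. Applied to $(T_0,T_1)$ this gives $|[T_0,T_1]|^2 \leq 4\,\id_T$, and applied to $(A_0,A_1)$ it will later let me rewrite $|\{A_0,A_1\}|^2 = 4\,\id_A - |[A_0,A_1]|^2$.

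First I would expand $W^2$; using $A_i^2 = T_i^2 = \id$ all diagonal contributions collect to $4\,\id_{AT}$ and the mixed contributions assemble into a commutator, yielding
\begin{align*}
W^2 \;=\; 4\,\id_{AT} \;-\; [A_0,A_1]\otimes [T_0,T_1].
\end{align*}
A first Cauchy--Schwarz (Hilbert--Schmidt, with weight $\rho_{AT}$) then gives $S^2 = (\tr W\rho_{AT})^2 \leq \tr(W^2\rho_{AT})$, which after rearrangement becomes
\begin{align*}
-\tr\!\bigl([A_0,A_1]\otimes[T_0,T_1]\,\rho_{AT}\bigr) \;\geq\; S^2 - 4 \;\geq\; 0,
\end{align*}
where the second inequality uses $S\geq 2$. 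A second Cauchy--Schwarz, in the form $|\tr(M\rho)|^2 \leq \tr(M^\dagger M\rho)$ with $M = [A_0,A_1]\otimes [T_0,T_1]$, together with the bound $|[T_0,T_1]|^2 \leq 4\,\id_T$ from above and a partial trace over $T$, then produces
\begin{align*}
(S^2-4)^2 \;\leq\; 4\,\tr\!\bigl(|[A_0,A_1]|^2\,\rho_A\bigr).
\end{align*}

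Finally, substituting $|\{A_0,A_1\}|^2 = 4\,\id_A - |[A_0,A_1]|^2$ converts this into an upper bound on the anticommutator square,
\begin{align*}
\tr\!\bigl(|\{A_0,A_1\}|^2\,\rho_A\bigr) \;\leq\; 4 - \tfrac{1}{4}(S^2-4)^2 \;=\; \tfrac{1}{4}\,S^2(8-S^2),
\end{align*}
and Jensen's inequality for the operator concave map $x\mapsto\sqrt{x}$ closes the argument,
\begin{align*}
\epsilon_+ \;=\; \tfrac{1}{2}\tr\!\bigl(|\{A_0,A_1\}|\,\rho_A\bigr) \;\leq\; \tfrac{1}{2}\sqrt{\tr\!\bigl(|\{A_0,A_1\}|^2\rho_A\bigr)} \;\leq\; \tfrac{S}{4}\sqrt{8-S^2} \;=\; \zeta,
\end{align*}
and one checks $\zeta\in[0,1]$ on $S\in[0,2\sqrt{2}]$ by elementary calculus. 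The step I expect to require the most care is the chain of two Cauchy--Schwarz inequalities: one has to preserve the sign information so that $4-S^2 \leq 0$ turns into $(S^2-4)^2$ on the right-hand side, and one has to resist the temptation to use the coarse operator-norm bound $\|W\|\leq 2\sqrt{2}$, which would wash out the quantitative $S$-dependence that makes the estimate nontrivial strictly above the local bound $S=2$.
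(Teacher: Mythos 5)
Your proof is correct and follows essentially the argument behind Proposition~2 of~\cite{KW16}, which the paper cites without reproducing: the identity $W^2 = 4\id - [A_0,A_1]\otimes[T_0,T_1]$, two applications of Cauchy--Schwarz, the complementarity relation $|\{A_0,A_1\}|^2 = 4\id - |[A_0,A_1]|^2$, and Jensen, which indeed yield $\tr\bigl(|\{A_0,A_1\}|^2\rho_A\bigr)\leq \tfrac{1}{4}S^2(8-S^2)$ and hence $\epsilon_+\leq\zeta$. The only implicit hypothesis is $A_i^2 = T_i^2 = \id$, i.e.\ that the binary observables are sharp; since the $A_\theta = P_0^\theta - P_1^\theta$ of Definition~\ref{Def_Eps_+} are built from POVM elements this is not automatic, but it holds without loss of generality by the Naimark dilation the paper itself invokes in Step~1 of the appendix.
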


This estimation $\zeta$ of $\epsilon_+$ is central to our proof. Indeed the security bounds we derive below rely on the fact that $\zeta<1$, which means that any Bell violation in the testing procedure leads to security on WSE and PV. In other words it is enough for Alice to estimate $\zeta<1$ in the testing procedure to be sure that her devices permit her to execute the protocols (PV or WSE) securely under
\begin{hyp}
  We assume that the adversarial or dishonest party cannot have access to an unlimited and perfect quantum memory or quantum entanglement. More specifically,
  \begin{enumerate}
    \item In the guessing game and in WSE, the adversary will either have a bounded storage or a noisy storage.
    \item In PV, the adversary will either have access to bounded entanglement or noisy entanglement.
  \end{enumerate}
  \label{Phys_assump}
\end{hyp}

\subsection{Guessing games and results}
In this section, we describe and analyse the perfect and imperfect guessing games. As the name suggests, the winning condition of the perfect guessing game is more strict than that of the imperfect guessing game. Bounding the probability that Bob wins the perfect guessing game is the first step to bounding the probability that he wins the more general imperfect guessing game. The motivation behind the analysis of the imperfect guessing game is to prove security of WSE and PV even {if the protocol is made robust to noise, which is inherent to any experimental implementation}.

\subsubsection{Perfect guessing game}
We begin with a formal description of the perfect guessing game.
\begin{ptol}(Perfect guessing game)
  \begin{enumerate}
  \item Alice creates $n$ identical bipartite states, chooses uniformly at random a string $\theta \in\{0,1\}^n$ and measures her $k^\mathrm{th}$ register using her main device with input $\theta_k$ to obtain an outcome $x_k$. This measurement produces an outcome string $x\in \{0,1\}^n$. At the same time she gives all the $B$ parts to Bob.
  \item Alice waits for a duration $\Delta t$ before sending her string $\theta$ to Bob.
  \item Bob tries to guess $x$ using $\theta$ and all his available information. In other words, Bob  produces an output $y$ and the (perfect) winning condition is $y=x$.
  \end{enumerate}
  \label{prot_guessing_game}
\end{ptol}

Let us analyse this game from the perspective of quantum theory and under the i.i.d.~assumption~\ref{hyp}. We will go through each step of the protocol again but with added descriptive comments. In the first step of the protocol, using the device $n$ times, Alice produces a bipartite state $\rho_{AB}=\sigma_{AB}^{\otimes n}$, and chooses the measurement setting $\theta$ to measure $\rho_A=\sigma_A^{\otimes n}$ with the POVM $\{P^\theta_x=\bigotimes_k P^{\theta_k}_{x_k}:x\in\{0,1\}^n\}$. This measurement can be seen as a tensor product of two binary measurements $\{P_0^0,P_1^0\}$ and $\{P_0^1,P_1^1\}$ because of the i.i.d.~assumption. At the same time, Alice sends to Bob a state which has i.i.d.~form $\rho_B=\sigma_B^{\otimes n}$ due to our assumption. Then, the waiting time enforces the noisy-storage model: Bob is allowed to perform any quantum operation to transform $B$ to $B'K$ where $B'$ is his quantum memory of dimension $d$ and $K$ is his (unbounded) classical memory. Bob is allowed to perform any measurement on his system $B'$, as advised by $K$ and $\theta$ and his information about the state (since he prepares the devices), in order to guess $x$. Note that for an honest implementation of the protocol, Alice does not need quantum memory, which makes the protocol easy to implement.

As the security of the protocols WSE and PV are expressed in terms of cheating probability (or equivalently in terms of min-entropy), we are here interested in the probability that Bob wins the guessing game. Indeed if this probability is low, then it means that the probability that the two protocols PV and WSE can be cheated is low also.
The winning probability of Bob in this case is nothing but the guessing probability $\lambda(n,d,\zeta)$ defined as
\begin{align}
  \lambda(n,d,\zeta):=\max_{\substack{\rho_{AB'K} \\ \mathrm{qqc}}}
  \max_{\{\mathcal{F}^\theta\}} \tr\left(2^{-n}
  \sum_{\theta,x \in \{0,1\}^n} P_x^\theta
  \otimes F_x^{\theta} \quad \rho_{AB'K}\right)
  \label{Def_lambda}
\end{align}
where the first maximization is over all qqc-states compatible with the marginal on Alice (the possible state are constraint by the value $\zeta$ Alice measured in the testing procedure), $P_x^\theta$  are the measurement operators of Alice as mentioned above and $F_x^{\theta}$ the arbitrary measurement operators of Bob acting on $B'K$ register. Note that the state $\rho_{B'K}:=\tr_A(\rho_{AB'K})$ is the qc-state that Bob gets after a quantum operation on the initial state $\rho_B=\sigma_B^{\otimes n}$ sent to him by Alice. The second maximization is a short hand for $2^n$ separate maximizations: for each $\theta$ we pick the POVM $\mathcal{F}^\theta=\{F^{\theta}_x:x\in\{0,1\}^n\}$ which maximizes the sum over $x$. The following Lemma, whose proof is presented in the appendix, gives a bound on this probability.
 \begin{lmm}[Key Lemma]
   In a perfect guessing game under the assumptions \ref{hyp} and \ref{Phys_assump} where the adversary holds a bounded quantum memory of dimension at most $d$, we have 
   \begin{align}
     \lambda(n,d,\zeta) &\leq \sqrt{d}\left( \frac{1}{2}+\frac{1}{2}\sqrt{\frac{1 + \zeta}{2}} \right)^n- \sum_{k = 0}^{t} \binom{n}{k} 2^{-n} \left( \sqrt{d}\left( \frac{1 + \zeta}{2} \right)^{k/2}-1\right) =:B(n,d,\zeta)
   \end{align}
   where $t$ is defined as
\begin{align}
t = \bigg\lfloor - \log d \cdot \bigg[ \log \bigg( \frac{1 + \zeta}{2} \bigg) \bigg]^{-1} \bigg\rfloor.
\end{align}
Observe that
\begin{align}
    B(n,d,\zeta)=\sqrt{d}\left( \frac{1}{2}+\frac{1}{2}\sqrt{\frac{1 + \zeta}{2}} \right)^n-O(n^t \cdot 2^{-n})
\end{align}
when $n\rightarrow \infty$ and $d$, $\zeta$ are constant, which decays exponentially in $n$ when $\zeta<1$.
\label{key_Lemma}
 \end{lmm}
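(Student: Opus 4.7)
The plan is to reduce $\lambda(n,d,\zeta)$ to an operator-norm problem and use Jordan's Lemma to handle Alice's unknown measurements. Define $\Gamma := 2^{-n} \sum_{\theta, x} P_x^\theta \otimes F_x^\theta$; since $\rho_{AB'K}$ is a density operator, $\lambda(n,d,\zeta) \leq \max_{\{F_x^\theta\}} \|\Gamma\|$. I would then bound this operator norm through the identity $\|\Gamma\| = \|\Gamma^\dagger \Gamma\|^{1/2}$, which is the source of the improvement over \cite{KW16}: a dimension factor $d$ appearing inside $\Gamma^\dagger \Gamma$ becomes $\sqrt{d}$ after the square root, yielding precisely the scaling in the statement.

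Next I would apply Jordan's Lemma to Alice's binary observables $A_0 = P_0^0 - P_1^0$ and $A_1 = P_0^1 - P_1^1$, simultaneously block-diagonalizing them into blocks of dimension at most $2$. By the i.i.d.\ assumption this decomposition factorizes across the $n$ rounds. In each qubit block the two observables become ordinary qubit operators parametrized by Bloch-sphere angles, and Lemma~\ref{Bound_Eps_+} together with the testing procedure provides a bound on the (weighted average) absolute anti-commutator in terms of $\zeta$. This reduction lets me compute the key products $P_x^\theta P_{x'}^{\theta'}$ which appear after expanding $\Gamma^\dagger \Gamma$.

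I would then expand $\Gamma^\dagger \Gamma$ as a double sum over $(\theta, \theta', x, x')$. Using Naimark's dilation I may assume Bob's POVMs are projective, so the cross terms with $\theta = \theta'$ collapse via $F_x^\theta F_{x'}^\theta = \delta_{xx'} F_x^\theta$. For $\theta \neq \theta'$ the remaining norms are controlled by products of Alice's qubit operators at positions where $\theta$ and $\theta'$ disagree; each such position contributes a factor bounded by $(1+\zeta)/2$, while agreeing positions give orthogonality constraints. Parametrizing the remaining sum by the Hamming distance $k = |\theta \oplus \theta'|$ yields, per pair, a bound of the form $\min\{1,\ \sqrt{d}\cdot((1+\zeta)/2)^{k/2}\}$: the trivial bound $1$ dominates for $k \leq t$ (precisely where the two candidates cross), and the weighted bound dominates for $k > t$. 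Summing via the binomial theorem and taking the square root recovers the closed form $B(n,d,\zeta)$.

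The main obstacle is executing the operator-norm computation so as to retain the $\sqrt{d}$ scaling rather than the $d$ scaling of \cite{KW16}: this demands a careful interplay between Naimark dilation, Jordan block decomposition, and the dimension constraint $|B'|\leq d$, so that the memory dimension enters only under a square root. A secondary technical difficulty is that the anti-commutator estimate $\epsilon_+ \leq \zeta$ is a \emph{weighted average} across Jordan blocks, and one must be careful when pushing this average bound inside a norm without losing strength, which is where the ``absolute effective'' form of $\epsilon_+$ (rather than the ordinary anti-commutator) plays the decisive role.
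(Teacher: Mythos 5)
Your high-level architecture matches the paper's (operator-norm bound, Jordan's Lemma, a per-pair overlap bound $\min\{1,\sqrt{d}((1+\zeta)/2)^{k/2}\}$ parametrized by the Hamming weight of $\theta\oplus\theta'$, binomial resummation), but the central norm estimate as you describe it does not produce the claimed bound. Writing $\Gamma=2^{-n}\sum_\theta \Pi^\theta$ with $\Pi^\theta=\sum_x P^\theta_x\otimes F^\theta_x$ and expanding $\Gamma^\dagger\Gamma$ into a double sum over $(\theta,\theta')$, the triangle inequality gives $\|\Gamma\|^2\leq 2^{-2n}\sum_{\theta,\theta'}\|\Pi^{\theta'}\Pi^\theta\|\leq 2^{-n}\sum_w m_w$ with $m_w:=\min\{1,\sqrt{d}((1+\zeta)/2)^{w_H(w)/2}\}$, i.e.\ $\|\Gamma\|\leq\sqrt{B(n,d,\zeta)}$ rather than $B(n,d,\zeta)$. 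Since $B\leq 1$ this is strictly weaker and halves the exponential decay rate, so "summing and taking the square root" does not recover the closed form. The paper avoids this loss by proving and using a Kittaneh-type inequality (Lemma~\ref{lmm:normofsum}), $\|\sum_i A_i\|\leq\max_j\sum_i\|\sqrt{A_i}\sqrt{A_j}\|$ for positive semidefinite $A_i$, which keeps a \emph{single} sum over $\theta$ (for the worst $\theta'$) instead of a double sum followed by a square root. Relatedly, the $\sqrt{d}$ does not come from square-rooting $\Gamma^\dagger\Gamma$: it arises inside the bound on each individual overlap $\|\sqrt{\Pi^{\theta'}_b}\sqrt{\Pi^\theta_b}\|$ (Lemma~\ref{lmm:normbnd}), via $\sum_{x'}\|F^{\theta'}_{\tilde{x}x'}\|\leq\tr(\mathds{1}_d)=d$ entering under an overall power $1/2$ produced by a $\|MM^\dagger\|^{1/2}$ and Schatten-$2$-norm chain. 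Your proposal does not contain that computation, which is where the dimension constraint $|B'|\leq d$ actually enters.

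Two further points. First, Naimark dilation must be applied to Alice's measurements only (to make Jordan's Lemma applicable); dilating Bob's POVMs to projective ones would enlarge his register beyond dimension $d$ and void the memory bound. The collapse of the diagonal ($\theta_k=\theta'_k$) positions comes from the orthogonality of \emph{Alice's} Jordan-block projectors, not from projectivity of Bob's operators. Second, the issue you flag as secondary --- converting the per-block quantities $\epsilon_{b_k}$ into the single testable average $\epsilon_+\leq\zeta$ --- requires an actual argument: one must decompose the qqc-state over Bob's classical register $K$, observe that $\sum_k p_k p_{b|k}=\tr(S_b\sigma_A^{\otimes n})=\prod_k p_{b_k}$ because Alice's marginal stays i.i.d.\ no matter what Bob's encoding does, pull the sum over blocks inside the $\min\{1,\cdot\}$, and then use concavity of $x\mapsto\sqrt{(1+x)/2}$ factorwise. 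Without this averaging step the bound would depend on the unobservable individual $\beta_{b_k}$ rather than on $\zeta$.
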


\subsubsection{Imperfect guessing game}

The consideration of imperfect guessing game is motivated by noise in experimental realizations of any protocols. Allowing noise between provers and verifiers in WSE or PQV allows these protocols to be implemented with current state-of-the-art quantum technologies.

Formally, the imperfect guessing game consists of exactly the same steps as the guessing game discussed in the previous section, except for the winning condition of Bob. Unlike the guessing game's strict winning condition $y=x$, in the imperfect guessing game Bob wins if his guess $y$ is such that $d_H(x,y)\leq \gamma n$ for $\gamma \in [0,1[$, where $d_H(\cdot,\cdot)$ is the Hamming distance. Formally
\begin{equation}
 \lambda_{\mathrm{ip}}(n,d,\zeta,\gamma) := \max_{\substack{\rho_{AB'K}\\\mathrm{qqc}}} \max_{\{\mathcal{F}^\theta\}} \tr\bigg(2^{-n} \sum_{\theta,x \in \{0,1\}^n } \sum_{\substack{y \in \{0,1\}^n\\ d_H(x,y)\leq \gamma n}} P_x^\theta \otimes F_{y}^{\theta} \quad \rho_{AB'K}\bigg) \label{eq:pwin_impefect}
\end{equation} 

and $\gamma$ can be understood as the maximum quantum bit error rate (QBER) allowed in the protocol. We recover the perfect guessing game by taking $\gamma=0$.

One of our main results in this paper is the following
 \begin{Thm}[Main Theorem]
   For an imperfect guessing game with the maximum "QBER" allowed $\gamma\in[0,1[$, where Bob holds a noisy storage device $\mathcal{E}$ such that $E_C^{(1)}(\mathcal{E})\leq \log(d)$, the winning probability of Bob
       \begin{align}
         \lambda_{\mathrm{ip}}(n,d,\zeta,\gamma) \leq 2^{h(\gamma)n} B(n,d,\zeta)=:B'(n,d,\zeta,\gamma)
       \end{align}
       where $h(\cdot)$ is the binary entropy and $B(n,d,\zeta)$ is the bound defined in Lemma~\ref{key_Lemma}. \label{main_Thm}
 \end{Thm}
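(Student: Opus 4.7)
The strategy is to combine two independent reductions: one to bring the noisy-storage setting into the bounded-storage setting handled by Lemma~\ref{key_Lemma}, and one to pass from the imperfect to the perfect winning condition. By definition of the one-shot entanglement cost, the hypothesis $E_C^{(1)}(\mathcal{E}) \leq \log d$ furnishes an LOCC map $\Lambda$ with $\Lambda(\rho \otimes \Psi_d^{\bar A\bar B}) = \mathcal{E}(\rho)$ for every input $\rho$. Bob can therefore implement any noisy-storage attack by locally preparing $\Psi_d^{\bar A\bar B}$, executing the Alice-side of $\Lambda$ on his (pre-processed) incoming register together with $\bar A$ before the waiting time, keeping only the register $\bar B$ of dimension $d$ and his classical transcript across the waiting period, and finally executing the Bob-side of $\Lambda$ on $\bar B$ once $\theta$ has arrived. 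The resulting strategy lies within the bounded-storage model with $|B'|\leq d$, and its qqc structure is preserved since Bob's operations never touch Alice's register; hence $\lambda_{\mathrm{ip}}$ in the noisy setting is at most the corresponding quantity in the bounded setting.

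For the second reduction, given any Bob POVM $\{F_y^\theta\}_y$ for the imperfect game, define
\[
\tilde F_x^\theta \;:=\; \frac{1}{N}\sum_{y\,:\,d_H(x,y)\leq \gamma n} F_y^\theta, \qquad N := \sum_{k=0}^{\lfloor \gamma n\rfloor}\binom{n}{k}.
\]
Because every Hamming ball of radius $\gamma n$ has the same cardinality $N$, double counting gives $\sum_x \tilde F_x^\theta = \mathds{1}$, so $\{\tilde F_x^\theta\}_x$ is a legitimate perfect-game POVM. A direct computation shows that, on the same attacking state, its perfect-game success probability is exactly $\lambda_{\mathrm{ip}}/N$, so $\lambda_{\mathrm{ip}}\leq N\cdot \lambda(n,d,\zeta)$. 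Combining this with the standard entropy bound $N \leq 2^{h(\gamma) n}$ (which holds in the meaningful regime $\gamma\leq 1/2$; beyond that the bound $B'(n,d,\zeta,\gamma)$ quickly exceeds $1$ and is trivially valid) and invoking Lemma~\ref{key_Lemma} to bound $\lambda(n,d,\zeta)$ by $B(n,d,\zeta)$ yields $\lambda_{\mathrm{ip}}(n,d,\zeta,\gamma) \leq 2^{h(\gamma) n} B(n,d,\zeta) = B'(n,d,\zeta,\gamma)$, as desired.

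The delicate point is the noisy-to-bounded reduction when $\Lambda$ involves two-way classical communication: a priori it is not obvious that all Alice-side operations can be carried out pre-waiting with only $\bar B$ retained during the storage period. The resolution is that the Alice-side of $\Lambda$ has no quantum output, so its complete action is a (possibly adaptive) quantum instrument on the registers $\sigma \bar A$, which Bob may perform in one go before the waiting period, storing its classical outcomes to drive the adaptive Bob-side corrections on $\bar B$ after $\theta$ arrives. Once this structural observation is in place, the second reduction is the one-line POVM-reshuffling identity above, and the final bound follows directly from Lemma~\ref{key_Lemma}.
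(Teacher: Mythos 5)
Your proposal is correct in the regime where the theorem is actually used, and it follows essentially the same two-stage structure as the paper: reduce the noisy-storage adversary to a bounded-storage one via the one-shot channel-simulation definition of $E_C^{(1)}$, then reduce the imperfect winning condition to the perfect one at the cost of a factor equal to the Hamming-ball volume $N=\sum_{k\leq\gamma n}\binom{n}{k}\leq 2^{h(\gamma)n}$, and finish with Lemma~\ref{key_Lemma}. The one place you genuinely deviate is the middle step: the paper substitutes $y=x\oplus z$, pulls the sum over the $N$ shifts $z$ outside the maximizations, and bounds each shifted game by $B(n,d,\zeta)$ (since a shift merely relabels Bob's POVM), whereas you average the POVM over Hamming balls to build a single legitimate perfect-game POVM whose value is exactly the imperfect value divided by $N$. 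The two arguments are equivalent in content and yield the same bound $\lambda_{\mathrm{ip}}\leq N\cdot B(n,d,\zeta)$; yours is arguably tidier because it invokes Lemma~\ref{key_Lemma} once rather than term by term. One inaccuracy: your parenthetical claim that for $\gamma>1/2$ the bound $B'(n,d,\zeta,\gamma)$ ``quickly exceeds $1$'' is false --- as $\gamma\to 1$ one has $h(\gamma)\to 0$, so $B'\to B(n,d,\zeta)$, which is exponentially small, while Bob can then win with probability close to $1$; the estimate $N\leq 2^{h(\gamma)n}$ genuinely fails there. This is not a defect relative to the paper, whose proof silently assumes $\gamma<1/2$ despite the statement allowing $\gamma\in[0,1[$, and the subsequent Lemma~\ref{asymp_lmm_2} imposes $\gamma\leq 1/2$ anyway; but you should state the restriction rather than claim the bound is trivially valid beyond it. Your extra care about two-way classical communication in the simulating LOCC flags a real subtlety that the paper glosses over entirely, though a fully airtight resolution would require the simulation to be one-way across the storage period.
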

 
\begin{proof}(sketch) We first look at the imperfect guessing game in the bounded storage model where the dimension of $B'$ is bounded by $d$. 
To obtain an upper bound on $\lambda_{\mathrm{ip}}(n,d,\zeta,\gamma)$ we note that
\begin{align}
  \tr \bigg(\sum_{x \in \{0,1\}^n } \sum_{\substack{y \in \{0,1\}^n\\ d_H(x,y)\leq \gamma n}} P_x^\theta \otimes F_{y}^{\theta} \quad \rho_{AB'K} \bigg) &= \tr \bigg(\sum_{x \in \{0,1\}^n } \sum_{\substack{z \in \{0,1\}^n\\ w_H(z)\leq \gamma n}} P_x^\theta \otimes F_{x \oplus z}^{\theta} \quad \rho_{AB'K} \bigg) \\
  &= \sum_{\substack{z \in \{0,1\}^n\\ w_H(z)\leq \gamma n}} \tr\bigg(\sum_{x \in \{0,1\}^n }  P_x^\theta \otimes F_{x \oplus z}^{\theta} \quad \rho_{AB'K} \bigg).
\end{align}

\noindent Then combining the previous remark with~\eqref{eq:pwin_impefect} we have,
\begin{align}
  \lambda_{\mathrm{ip}}(n,d,\zeta,\gamma) \leq \sum_{\substack{z \in \{0,1\}^n\\ w_H(z)\leq \gamma n}} \max_{\substack{\rho_{AB'K}\\\mathrm{qqc}}} \max_{\{\mathcal{F}^\theta\}} \tr\bigg(\sum_{x \in \{0,1\}^n }  P_x^\theta \otimes F_{x \oplus z}^{\theta} \quad \rho_{AB'K} \bigg) 
\end{align}
where the first maximization is over all qqc-states compatible with the marginal on Alice.
 
\noindent Note that all the trace terms in the sum are equivalent since $z$ only permutes Bob's measurement operators. Then by using the Key Lemma~\ref{key_Lemma} to bound each term of the sum over $z$ we can write,
\begin{align}
  \lambda_{\mathrm{ip}}(n,d,\zeta,\gamma) &\leq B(n,d,\zeta) \times \sum_{\substack{z \in \{0,1\}^n\\ w_H(z)\leq \gamma n}} 1 \\
  &= B(n,d,\zeta) \times \sum_{k=0}^{\lfloor \gamma n\rfloor} \binom{n}{i}
\end{align}
To proceed further, we assume that $\gamma<1/2$ so $\lfloor \gamma n \rfloor$ is bounded by $\lfloor n/2 \rfloor$ and therefore by Lemma 25 of~\cite{O_F_S} we can bound the binomial sum by the binary entropy function $h(\cdot)$ so that,
\begin{align}
  \lambda_{\mathrm{ip}}(n,d,\zeta) \leq 2^{h(\gamma)n} \cdot B(n,d,\zeta)=:B'(n,d,\zeta,\gamma).
\end{align}

It remains to extend this bound to an adversary who holds a noisy memory $\mathcal{E}$ such that the one-shot entanglement cost satisfies $E_C^{(1)}(\mathcal{E}) \leq \log(d)$. Indeed, by definition of the one-shot entanglement cost~\cite{BBCW15}, the above condition means that $\mathcal{E}$ can be simulated by the identity channel $\mathds{1}_d$. Then all strategies achievable with $\mathcal{E}$ are achievable with $\mathds{1}_d$, particularly the strategy which maximizes the probability of winning in the bounded storage model. This proves the Theorem.
\end{proof}
 
The bound on the winning probability of the imperfect guessing game also decays exponentially in $n$ for suitably chosen parameters.
 \begin{lmm}
   If the maximum QBER allowed $\gamma$ satisfies the following conditions
   \begin{align}
     &\gamma \leq 1/2\\
     & h(\gamma) < -\log\left(\frac{1}{2} + \frac{1}{2}\sqrt{\frac{1+\zeta}{2}} \right)
   \end{align}
then $B'(n,d,\zeta,\gamma)$ decays exponentially in $n$, when $n \rightarrow \infty$ and $d,\zeta$ are fixed.
\label{asymp_lmm_2}
 \end{lmm}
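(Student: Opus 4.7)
The plan is to substitute the explicit form of $B(n,d,\zeta)$ given in the Key Lemma~\ref{key_Lemma} into $B'(n,d,\zeta,\gamma) = 2^{h(\gamma) n} B(n,d,\zeta)$ and verify that each of the two resulting pieces decays exponentially in $n$ under the stated hypotheses. For brevity let $c := \tfrac{1}{2} + \tfrac{1}{2}\sqrt{(1+\zeta)/2}$, the base of the dominant exponential in $B$.

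First I would control the subtracted sum in $B$. Since $\zeta \in [0,1]$ and $d \geq 1$, each bracket $\sqrt{d}\bigl((1+\zeta)/2\bigr)^{k/2} - 1$ is bounded in absolute value by $\sqrt{d}$, and $t$ is a constant depending only on $d$ and $\zeta$. The crude estimate $\binom{n}{k} \leq n^{t}$ for $0 \leq k \leq t$ then yields
\[
\left| \sum_{k=0}^{t} \binom{n}{k} 2^{-n} \bigl( \sqrt{d}((1+\zeta)/2)^{k/2} - 1 \bigr) \right| = O\bigl(n^{t} \cdot 2^{-n}\bigr),
\]
with the implicit constant depending only on $d$ and $\zeta$. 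Consequently $B(n,d,\zeta) \leq \sqrt{d}\, c^{n} + O(n^{t} \cdot 2^{-n})$, and multiplying by $2^{h(\gamma) n}$ gives
\[
B'(n,d,\zeta,\gamma) \leq \sqrt{d}\cdot 2^{n(h(\gamma) + \log c)} + O\bigl(n^{t} \cdot 2^{n(h(\gamma) - 1)}\bigr).
\]

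It then suffices to check that both exponents are strictly negative. The hypothesis $h(\gamma) < -\log c$ directly gives $h(\gamma) + \log c < 0$, which handles the first term. For the second term, I would observe that $c > 1/2$ strictly whenever $\zeta \geq 0$, because $\sqrt{(1+\zeta)/2} > 0$; hence $-\log c < 1$, and the chain $h(\gamma) < -\log c < 1$ forces $h(\gamma) - 1 < 0$. Both terms therefore decay exponentially in $n$, and so does their sum $B'(n,d,\zeta,\gamma)$. The only subtle point worth flagging is this strict inequality $c > 1/2$, which is exactly what defeats the polynomial prefactor $n^{t}$; the hypothesis $\gamma \leq 1/2$ plays no role in the asymptotic estimate itself but is the condition under which $B'$ is a valid upper bound in the first place, via the binary-entropy estimate on partial binomial sums used in Theorem~\ref{main_Thm}.
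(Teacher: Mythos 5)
Your proof is correct and takes essentially the same route as the paper's, which simply substitutes the asymptotic form $B(n,d,\zeta)=\sqrt{d}\,c^{n}-O(n^{t}2^{-n})$ from Lemma~\ref{key_Lemma} into $B'=2^{h(\gamma)n}B$ and declares the conclusion straightforward. You merely supply the details the paper omits, in particular the observation that $c>1/2$ gives $-\log c<1$ and hence $h(\gamma)<1$, which is exactly what makes the $O(n^{t}\cdot 2^{n(h(\gamma)-1)})$ correction term decay as well.
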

Note that it is always possible to have a $\gamma$ which satisfies these conditions since the right hand sides of the inequalities are strictly positive.
\begin{proof} First note that $B'(n,d,\zeta,\gamma)=2^{h(\gamma) n}B(n,d,\zeta)$. According to Lemma \ref{key_Lemma}, $B(n,d,\zeta)=\sqrt{d}\left( \frac{1}{2}+\frac{1}{2}\sqrt{\frac{1 + \zeta}{2}} \right)^n-O(n^t \cdot 2^{-n})$
 It is now straightforward to see that the condition on $\gamma$
 implies the exponential decay of $B'(n,d,\zeta,\gamma)$.
 \end{proof}
 
 \section{Applications}
 \label{sec:applications}
The bound on the winning probability of the guessing game can be applied to prove the security of several two-party cryptographic protocols. Here we will apply it to prove the security of Weak String Erasure and Position Verification. For the first protocol, we can directly consider an attack on WSE as an attack on the guessing game. For the second protocol, as the security of PV can be reduced to the security of WSE~\cite{RG15}, we also get a security proof for PV.
 
\subsection{Device-Independent Weak String Erasure}
\label{subsec:WSE}
\subsubsection{({\fontencoding{LGR}\selectfont a,e=0})-WSE in the noisy storage model}

Let the two protagonists of
($\alpha,\epsilon$)-WSE be Alice and Bob. The goal of this cryptographic primitive is that at the end of its execution Alice holds a random bit string $x$ and Bob holds a
random substring of $x$ called $x_{\mathcal{I}}$. We can view this
$x_{\mathcal{I}}$ as $x$ where we have randomly erased some bits, hence the name WSE (Protocol \ref{ptol:WSE}). For a formal definition of ($\alpha,\epsilon$)-WSE we refer to
~\cite{Steph_1}.  
\begin{ptol}[Weak String Erasure] In the case where Alice and Bob are honest, the protocol is executed as follows:
\begin{enumerate}
   \item Alice tests her devices following the testing protocol~\ref{Test_ptol} and obtains $\zeta$, an estimate of an upper bound on the absolute effective anti-commutator.
  \item Alice creates $n$ identical bipartite states. She chooses uniformly at random a string $\theta\in\{0,1\}^n$ and measures her part of the $k^{\mathrm{th}}$ register by inputting it and $\theta_k$ to her measurement device to get an outcome $x_k$. This process generates an outcome string $x\in\{0,1\}^n$. At the same time she sends all the $B$ registers to Bob.
  \item Bob chooses uniformly at random $\theta'\in\{0,1\}^n$, and measures his registers in the same manner as Alice to get an outcome string $x'\in\{0,1\}^n$.
  \item Alice waits for a duration $\Delta t$ before sending $\theta$ to Bob.
  \item Bob determines the index set $\mathcal{I}:=\{k\in[n]: \theta'_k = \theta_k\}$, and obtains the corresponding substring $x'_\mathcal{I}$.
\end{enumerate}
\label{ptol:WSE}
\end{ptol}
At the end of the protocol Alice holds $x$ and Bob holds
$(\mathcal{I},x'_{\mathcal{I}})$. It can be easily checked that in the ideal implementation, $x'_{\mathcal{I}}$ is a substring of $x$ so Bob does not know the full $x$ and Alice does not know $\mathcal{I}$. WSE is secure for an honest Bob if Alice cannot get the set $\mathcal{I}$, and for an honest Alice if it is hard for Bob to guess the entire string $x$.

The probability that Bob guesses $x$, and so that he wins the game, is defined as
\begin{align}
    \lambda_\textup{WSE}(n,d,\zeta):=\max_{\rho_{AB K}}
  \max_{\{\mathcal{F}^\theta\}} \tr\left(2^{-n}
  \sum_{\theta,x \in \{0,1\}^n} P_x^\theta
  \otimes F_x^{\theta} \quad \rho_{ABK}\right)
\end{align}
where the first maximization is over all qqc-states compatible with the marginal on Alice. Saying that $x$ is hard to guess means that this probability decays exponentially with $n$ or equivalently \[\exists \alpha >0:\ H_{\min}(X|BK\Theta)/n \geq \alpha.\]

If Bob is dishonest, we can look at any attack strategy of Bob as a guessing strategy in the guessing game where Bob has to guess Alice's bit string $x$. Thus we have the following:
\begin{Crl}
  For $(\alpha,\epsilon=0)$-WSE in the noisy storage model, under the
  assumption~\ref{hyp}, if Alice's memoryless device is such that $\zeta<1$ then the cheating probability $\lambda_{\mathrm{WSE}}(n,d,\zeta)$ of Bob is upper bounded by $B'(n,d,\zeta,\gamma)$, where $B'(n,d,\zeta,\gamma)$ is defined in Theorem~\ref{main_Thm}.
\end{Crl}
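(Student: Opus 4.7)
The plan is to observe that a dishonest Bob's attack on the Weak String Erasure protocol is structurally identical to a strategy in the guessing game of Protocol~\ref{prot_guessing_game}. In both settings, Alice first runs the testing Protocol~\ref{Test_ptol} to obtain $\zeta$; then, under the i.i.d.\ Assumption~\ref{hyp}, she prepares a bipartite state $\rho_{AB} = \sigma_{AB}^{\otimes n}$, chooses $\theta \in \{0,1\}^n$ uniformly at random, measures the $A$ register with the product POVM $\{P_x^\theta\}$ to obtain $x$, and sends $\rho_B$ to Bob; Bob applies some quantum operation producing a qqc state $\rho_{AB'K}$ with Bob's side subject to the noisy-storage constraint; after time $\Delta t$, Alice announces $\theta$ and Bob outputs a guess. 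The only formal difference between Protocol~\ref{ptol:WSE} and Protocol~\ref{prot_guessing_game} is that an \emph{honest} Bob would also perform measurements in bases $\theta'$, but a dishonest Bob may discard any honest step and implement the optimal guessing strategy.

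The first step is therefore to compare the defining expressions. By inspection, $\lambda_{\mathrm{WSE}}(n,d,\zeta)$ and $\lambda(n,d,\zeta)$ from~\eqref{Def_lambda} are maximizations of the same objective over the same set: all qqc states $\rho_{AB'K}$ with marginal on $A$ compatible with the value $\zeta$ certified by the testing protocol (via Lemma~\ref{Bound_Eps_+} and the i.i.d.\ structure), and all Bob POVMs $\{\mathcal{F}^\theta\}_{\theta \in \{0,1\}^n}$. One concludes $\lambda_{\mathrm{WSE}}(n,d,\zeta) \leq \lambda(n,d,\zeta)$, with equality when Bob's side is bounded in dimension.

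Next, I would invoke Theorem~\ref{main_Thm} with the ``perfect'' winning condition corresponding to $(\alpha,\epsilon=0)$-WSE, i.e.\ $\gamma = 0$. Since $h(0)=0$, we have $B'(n,d,\zeta,0) = B(n,d,\zeta)$, matching the Key Lemma. Theorem~\ref{main_Thm} also extends this bound from the bounded-storage case ($|B'| \leq d$) to the noisy-storage case where Bob's channel $\mathcal{E}$ satisfies $E_C^{(1)}(\mathcal{E}) \leq \log d$, via the one-shot entanglement-cost simulation argument already used in its proof. Combined with Lemma~\ref{key_Lemma}, the hypothesis $\zeta < 1$ guarantees exponential decay of $B(n,d,\zeta)$ in $n$, yielding the desired $H_{\min}(X|BK\Theta)/n \geq \alpha$ for some $\alpha > 0$.

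The main technical obstacle is not in this corollary itself but was already absorbed into Theorem~\ref{main_Thm}; here the only care needed is to verify that the strategy set of dishonest Bob in WSE is precisely the strategy set of Bob in the guessing game under Assumptions~\ref{hyp} and~\ref{Phys_assump}, so that the reduction from WSE to guessing goes through without losing any factor. This is immediate from the definitions, which makes the corollary a direct specialization of the main theorem.
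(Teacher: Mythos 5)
Your proposal is correct and follows essentially the same route as the paper, which simply notes that any cheating strategy of Bob in $(\alpha,\epsilon=0)$-WSE is a strategy in the guessing game and applies Theorem~\ref{main_Thm} directly. Your additional observation that the perfect case corresponds to $\gamma=0$ with $B'(n,d,\zeta,0)=B(n,d,\zeta)\leq B'(n,d,\zeta,\gamma)$ is consistent with (and slightly sharper than) the stated bound.
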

\begin{proof}We can directly apply Theorem~\ref{main_Thm}
on ($\alpha,\epsilon=0$)-WSE by considering Bob's cheating strategy
as a guessing game.
\end{proof}
\subsubsection{({\fontencoding{LGR}\selectfont a,e=0})-WSE in noisy entanglement model}

In order to make the link between WSE and PV, we describe briefly WSE in the noisy entanglement model (see~\cite{RG15} for more details). The protocol is the same as before but now there are two Bobs, called
Bob1 and Bob2, who share an entangled state $\rho_{B_1 B_2}$ such that $E_C^{(1)}(\rho_{B_1B_2}) \leq \log(d)$ (which replaces Bob's channel $\mathcal{E}$ used in the noisy storage model), and can only communicate classically from Bob1 to Bob2. It is Bob2 who is asked to get the pair $(\mathcal{I},x_{\mathcal{I}})$, while Alice sends $\rho_{B}$ to Bob1 and $\theta$ to Bob2. If the Bobs are cheaters, Bob1 will try to send $\rho_{B}$ to Bob2 using their entanglement and classical communication, in order to enable Bob2 to guess the full outcome string $x\in \{0,1\}^n$ of Alice in the perfect case (or at least $(1-\gamma)n$ bits in the imperfect case). 

The Bobs play the role of the malicious provers in PV, called $M_1$ and $M_2$ who both want to guess $x$. The fact that in PV they both have to guess $x$ to be able to cheat the protocol makes PV harder to cheat than WSE in the noisy-entangled model where only one Bob (Bob2) needs to guess $x$. Because it is harder to cheat in PV, proving the security on this model of WSE proves the result for PV~\cite{RG15}. Again we say that WSE in the noisy-entangled model is secure if the cheating probability denoted by $\lambda_{NE}$ decays exponentially with $n$. In the two following Lemmas we first prove the security of WSE for the bounded-entanglement model, and then extend it to noisy-entanglement model.
\begin{Def}
    For $(\alpha,\epsilon=0)$-WSE in the bounded-entanglement model, the probability $\lambda_{BE}(n,d,\zeta)$ that Bob2 perfectly guesses Alice's output string $x \in \{0,1\}^n$ is,
    \begin{align}
        \lambda_{\mathrm{BE}}(n,d,\zeta):=\max_{\rho_{AB_2 K}}
  \max_{\{\mathcal{F}^\theta\}} \tr\left(2^{-n}
  \sum_{\theta,x \in \{0,1\}^n} P_x^\theta
  \otimes F_x^{\theta} \quad \rho_{AB_2K}\right)
    \end{align}
where the first maximization is over all qqc-states compatible with the marginal on Alice (which are constraint by the value $\zeta$ measured in the testing procedure). Here the state $\rho_{B_2}$ is of dimension at most $d$. 
\end{Def}

In the following Lemma~\ref{BE_Lemma} we look at the special case where
$\rho_{B_1B_2}$ is a maximally entangled state of local dimension $d$ (this case is WSE in the bounded
entanglement model). This Lemma is a variant of
Lemma~\ref{key_Lemma}.
\begin{lmm}
  For WSE in the bounded-entanglement model, where the two Bobs share a perfect entangled state $\rho_{B_1B_2}$ of dimension at most $d^2$, the probability $\lambda_{BE}(n,d,\zeta)$ that Bob2 perfectly guesses Alice's output string $x \in \{0,1\}^n$ is
  \begin{align}
    \lambda_{\mathrm{BE}}(n,d,\zeta) \leq B(n,d,\zeta)
  \end{align}
  where $B(n,d,\zeta)$ is defined in Lemma~\ref{key_Lemma}.
  \label{BE_Lemma}
\end{lmm}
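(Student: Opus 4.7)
The plan is to reduce the bounded-entanglement WSE scenario directly to the bounded-storage guessing game bounded in the Key Lemma (Lemma~\ref{key_Lemma}). The key observation is that any joint cheating strategy of Bob1 and Bob2, using a shared entangled state of local dimension at most $d$ together with one-way classical communication from Bob1 to Bob2, leaves Bob2 (just before he receives $\theta$) holding a qqc-state $\rho_{AB_2 K}$ in which $|B_2| \leq d$ and $K$ is a classical register of unrestricted size.

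First I would write down an arbitrary cheating strategy explicitly: Bob1 receives $\rho_B$ from Alice, adjoins arbitrary ancillas, applies an arbitrary joint quantum channel to $\rho_B$, his share of the entanglement, and the ancillas, and finally reads out a classical message that is stored in the register $K$ and forwarded to Bob2. Since only classical information crosses the channel between the two labs, Bob2's quantum register remains his half of the preshared state, possibly updated by local operations conditioned on $K$; in particular its dimension never exceeds $d$. The resulting combined state is therefore exactly a qqc-state of the form required in definition~\eqref{Def_lambda}. Bob2 then applies an arbitrary POVM $\{F_x^\theta\}$ depending on the basis string $\theta$ he receives from Alice.

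Next I would observe that Alice's marginal is $\rho_A = \sigma_A^{\otimes n}$ by the i.i.d.\ assumption (Assumption~\ref{hyp}), and the Bell test in Protocol~\ref{Test_ptol} constrains the admissible $\sigma_A$ through the estimated value $\zeta$ in exactly the same way as for the perfect guessing game. Hence the set of states $\rho_{A B_2 K}$ achievable by an LOCC strategy of the two Bobs is a subset of the set of all qqc-states compatible with Alice's marginal and with $|B_2| \leq d$, which is precisely the set over which $\lambda(n,d,\zeta)$ is maximized in~\eqref{Def_lambda}. It follows that
\begin{align}
    \lambda_{\mathrm{BE}}(n,d,\zeta) \leq \lambda(n,d,\zeta) \leq B(n,d,\zeta),
\end{align}
where the second inequality is the Key Lemma.

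The only delicate point is the opening reduction: one needs to be sure that the LOCC structure really does force Bob2's quantum register to stay within dimension $d$. This comes down to the facts that classical messages cannot enlarge a quantum system, and that teleportation---or any other LOCC trick exploiting the preshared entanglement---merely deposits a state into Bob2's $d$-dimensional register without inflating its dimension. Once this is granted, the rest of the argument is a direct invocation of the Key Lemma, requiring no new technical work.
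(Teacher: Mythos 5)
Your proposal is correct and follows essentially the same route as the paper: both identify that the two Bobs' LOCC strategy leaves Bob2 with a qqc-state $\rho_{AB_2K}$ whose quantum register has dimension at most $d$, so the expression for $\lambda_{\mathrm{BE}}$ is formally identical to $\lambda(n,d,\zeta)$ and the Key Lemma applies directly. Your extra care in tracking why classical communication cannot inflate Bob2's quantum register is a welcome elaboration of a step the paper treats as immediate, but it is not a different argument.
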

\begin{proof}
The guessing probability of Bob2 in this model is given by
\begin{align}
  \lambda_{\mathrm{BE}}(n,d,\zeta):=\max_{\rho_{AB_2 K}}
  \max_{\{\mathcal{F}^\theta\}} \tr\left(2^{-n}
  \sum_{\theta,x \in \{0,1\}^n} P_x^\theta
  \otimes F_x^{\theta} \quad \rho_{AB_2K}\right)
\end{align}
where the first maximization is over all qqc-states compatible with the marginal on Alice.
Note that this is the same expression as $\lambda(n,d,\zeta)$ except that the state $\rho_{AB'K}$ is replaced by $\rho_{AB_2K}$. We can then invoke Lemma~\ref{key_Lemma} since Bob2's measurements are also
acting jointly on $d$-dimensional quantum register $B_2$ and an arbitrary large classical register $K$. 
\end{proof}

We now want to extend the result to the case where the adversary holds noisy entanglement and must guess Alice's string up to some error tolerance $\gamma$.
\begin{Def}
    For $(\alpha,\epsilon=0)$-WSE in the noisy-entanglement model, the probability $\lambda_{NE}(n,d,\zeta,\gamma)$ that Bob2 guesses Alice's output string $x \in \{0,1\}^n$ with an error rate, as defined in the paragraph before equation \eqref{eq:pwin_impefect}, at most $\gamma$ is,
    \begin{align}
        \lambda_{\mathrm{NE}}(n,d,\zeta,\gamma) &:=\max_{\rho_{AB_2K}} \max_{\{\mathcal{F}^\theta\}}\ \tr\left(2^{-n} \sum_{\theta,x \in \{0,1\}^n} \sum_{\substack{y \in \{0,1\}^n \\ d_H(x,y)\leq \gamma n}}P_x^\theta \otimes F_{y}^\theta \quad \rho_{AB_2K}\right)
    \end{align}
where the first maximization is over all qqc-states compatible with the marginal on Alice. Here we assume that the state shared by the two Bobs $\rho_{B_1B_2}$ is such that $E_C^{(1)}(\rho_{B_1B_2}) \leq \log(d)$.
\end{Def}

Now we tackle the general case where $\rho_{B_1B_2}$ is a noisy-entangled state such that $E_C^{(1)}(\rho_{B_1B_2}) \leq \log(d)$.
\begin{lmm}
  Consider $(\alpha,\epsilon=0)$-WSE in the noisy-entanglement model, where the two Bobs share a noisy entangled state $\rho_{B_1 B_2}$ such that $E_C^{(1)}(\rho_{B_1 B_2}) \leq \log(d)$. If Alice's device is such that $\zeta <1$, then the probability $\lambda_{\mathrm{NE}}(n,d,\zeta)$ that Bob2
  produces a guess $y\in\{0,1\}^n$ and
  $d_H(x,y) \leq \gamma n$ with $x \in\{0,1\}^n$ being Alice's output string, is upper bounded
  as follows
  \begin{align}
    \lambda_{\mathrm{NE}}(n,d,\zeta,\gamma)\leq B'(n,d,\zeta,\gamma)
  \end{align}
  where $B'(n,d,\zeta,\gamma)$ is defined in Theorem~\ref{main_Thm}. \label{NE-WSE_Lemma}
\end{lmm}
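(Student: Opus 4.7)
The plan is to chain two reductions. First I would reduce the noisy-entanglement setting to the bounded-entanglement setting using the one-shot entanglement cost: since $E_C^{(1)}(\rho_{B_1B_2})\leq \log(d)$, by the definition \eqref{def_E_C} there exist an integer $M\leq d$ and an LOCC map $\Lambda$ between Bob1 and Bob2 such that $\Lambda(\Psi_M^{\bar B_1\bar B_2})=\rho_{B_1B_2}$. Hence any strategy that the dishonest Bobs carry out starting from $\rho_{B_1B_2}$ can be simulated by first applying $\Lambda$ to a maximally entangled state $\Psi_M^{\bar B_1\bar B_2}$ and then running their original strategy, so $\lambda_{\mathrm{NE}}(n,d,\zeta,\gamma)$ is at most the analogous cheating probability in the bounded-entanglement model with local dimension at most $d$.

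Second, I would lift the perfect bounded-entanglement bound of Lemma~\ref{BE_Lemma} to the imperfect case by replaying verbatim the combinatorial argument used in the proof of Theorem~\ref{main_Thm}. Setting $z:=x\oplus y$, the condition $d_H(x,y)\leq \gamma n$ rewrites as $w_H(z)\leq \gamma n$, and the double sum over $(x,y)$ decomposes into a sum over error patterns $z$ whose inner sums are each the perfect BE guessing probability with Bob's POVM harmlessly relabelled by $z$. By Lemma~\ref{BE_Lemma} each inner term is bounded by $B(n,d,\zeta)$, and by Lemma~25 of~\cite{O_F_S} the number of admissible $z$ is at most $2^{h(\gamma)n}$, giving
\begin{align}
  \lambda_{\mathrm{NE}}(n,d,\zeta,\gamma)\leq 2^{h(\gamma)n}\, B(n,d,\zeta) \;=\; B'(n,d,\zeta,\gamma).
\end{align}

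The main obstacle is the first step, and in particular whether the LOCC map $\Lambda$ is an admissible operation in the noisy-entanglement model, which during the protocol only permits one-way classical communication from Bob1 to Bob2. This is resolved by the observation that the simulation of $\rho_{B_1B_2}$ from $\Psi_M^{\bar B_1\bar B_2}$ can be performed \emph{before} Alice transmits $\theta$, so the (possibly two-way) LOCC needed to invoke the definition of $E_C^{(1)}$ is absorbed into the pre-protocol entanglement-preparation phase and never interferes with the one-way in-protocol constraint. Once this is settled, the rest of the proof is a direct chaining of Lemma~\ref{BE_Lemma} with the combinatorics from Theorem~\ref{main_Thm}, with no further estimate required.
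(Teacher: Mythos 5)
Your proposal is correct and follows essentially the same route as the paper: the $z=x\oplus y$ reparametrisation combined with Lemma~25 of~\cite{O_F_S} to pick up the $2^{h(\gamma)n}$ factor, the bounded-dimension bound $B(n,d,\zeta)$ for each error pattern, and the reduction from $\rho_{B_1B_2}$ to $\Psi_M$ via the definition of $E^{(1)}_{\textup{C,LOCC}}$; the only difference is that you perform the two reductions in the opposite order, which is immaterial. Your explicit remark that the (possibly two-way) LOCC simulation of $\rho_{B_1B_2}$ from $\Psi_M$ can be absorbed into the pre-protocol preparation phase, and hence does not conflict with the one-way Bob1$\to$Bob2 communication constraint, addresses a point the paper leaves implicit and is a welcome clarification.
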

\begin{proof} We first look at the imperfect guessing game in the bounded entanglement model, where Bob1 and Bob2 share a maximally entangled state of dimension $M\leq d$:
\begin{equation}
  \left|\Psi_M^{B_1B_2} \right>:= \frac{1}{\sqrt{M}} \sum_{i=1}^M \left|i^{B_1}\right>\left|i^{B_2}\right>
\end{equation}
Denote $\Psi_M:=\left|\Psi_M^{B_1B_2}\right>\left<\Psi_M^{B_1B_2}\right|$. Note that the fact that the local dimension $\Psi_M$ is at most $d$ implies that Bob2's quantum state $\rho_{B_2}$ has a dimension bounded by $d$. Hence it is easy to see that
 \begin{align}
   \lambda_{\mathrm{NE}}(n,d,\zeta,\gamma) &:=\max_{\rho_{AB_2K}} \max_{\{\mathcal{F}^\theta\}}\ \tr\left(2^{-n} \sum_{\theta,x \in \{0,1\}^n } \sum_{\substack{y \in \{0,1\}^n \\ d_H(x,y)\leq \gamma n}} P_x^\theta \otimes F_{y}^\theta \quad \rho_{AB_2K}\right) \\
   & \leq \max_{\rho_{AB_2K}} \max_{\{\mathcal{F}^\theta\}}\ \tr\left(2^{-n} \sum_{\theta,x \in \{0,1\}^n } \sum_{\substack{z \in \{0,1\}^n \\ w_H(z)\leq \gamma n}} P_x^\theta \otimes F_{x \oplus z}^\theta \quad \rho_{AB_2K}\right)\\
   & \leq \sum_{\substack{z \in \{0,1\}^n \\ w_H(z)\leq \gamma n}} \max_{\rho_{AB_2K}} \max_{\{\mathcal{F}^\theta\}}\ \tr\left(2^{-n} \sum_{\theta,x \in \{0,1\}^n }  P_x^\theta \otimes F_{x \oplus z}^\theta \quad \rho_{AB_2K}\right)
 \end{align}
where the first maximization is over all qqc-states compatible with the marginal on Alice,
can be bounded by the techniques in the proof of the Theorem~\ref{main_Thm} since the register $B_2$ has bounded dimension. We have
 \begin{align}
   \lambda_{\mathrm{NE}}(n,d,\zeta,\gamma) \leq 2^{h(\gamma)n } \times B(n,d,\zeta) \leq B'(n,d,\zeta,\gamma).
 \end{align}

We can extend this bound against an adversary who holds a noisy entangled state $\rho_{B_1B_2}$ such that $E_{\textup{C,LOCC}}^{(1)}(\rho_{B_1B_2}) \leq \log(d)$. Indeed by definition (eq. \eqref{def_E_C}) of the one shot entanglement cost of the state $\rho_{B_1B_2}$ denoted $E_{\textup{C,LOCC}}^{(1)}(\rho_{B_1B_2})$ ~\cite{BD11}, saying that $E_{\textup{C,LOCC}}^{(1)}(\rho_{B_1B_2})\leq \log(d)$ means that $\rho_{B_1B_2}$ can be created from a perfectly entangled state $\Psi_M$ of dimension $M\leq d$. Thus, all strategies achievable with $\rho_{B_1B_2}$ are achievable with $\Psi_M$. In particular the strategy which maximizes the probability of winning with respect to $\rho_{B_1B_2}$ is achievable with $\Psi_M$ which proves the Lemma.
 \end{proof}

If $\gamma$ in the protocol is such that it satisfies the condition of Lemma~\ref{asymp_lmm_2}, the previous bound proves the security of ($\alpha,\epsilon=0$)-WSE since it decays exponentially.

\subsection{Device-Independent Position Verification}
\label{subsec:DIPV}
In the following we will prove that PV in the noisy entanglement model (NE) is device-independently
secure. Indeed the attacks on PV in the NE model can be mapped to attacks on WSE in the NE model~\cite[Theorem 14]{RG15}. As we have proved in Lemma~\ref{NE-WSE_Lemma} that WSE in the NE model is device-independently secure, PV in the NE model must be secure.

Here we only speak about the one dimensional position verification protocol. In PV there are three protagonists in the honest case: two verifiers ($V_1$ and $V_2$) and one prover ($P$). The prover claims to be at some geographical position, and the PV protocol permits to check whether this is true.

\begin{ptol}[Position Verification] Let us assume $P$ has claimed his position to be in the middle of both verifiers (fig. \ref{fig_PV}). The verifiers check this claim by the following procedure:
  \begin{enumerate}
    \item $V_1$ tests his devices as described in the testing protocol
      \ref{Test_ptol}
    \item At the beginning of the protocol, the two verifiers $V_1$
      and $V_2$ share a random bit string $\theta \in \{0,1\}^n$.
    \item $V_1$ prepares a bipartite state (which is ideally a maximally entangled state)           $\rho_{V_1 P}=\sigma_{V_1 P}^{\otimes n}$ which has a tensor product
      structure, and sends the part $\rho_P:=\tr_{V_1}(\rho_{V_1 P})$
      to the prover. 
      
      $V_2$ sends the string $\theta$ to the prover, such that the
      prover receives $\theta$ and $\rho_P$ at the same time.
      
      $V_1$ applies the measurement $\mathcal{M}^\theta=\{P_y^\theta:=\bigotimes_{j \in [n]}
      P_{y_j}^{\theta_j}, y\in \{0,1\}^n \}$ to his
      part of the state $\rho_{V_1 P}$ and gets $x$.
    \item The prover applies a projective measurement
      $\mathcal{M}^\theta$, and gets a bit string $y
      \in \{0,1\}^n$. Then he sends $y$ to both verifiers.
    \item  Then $V_1$ compares
      (using the Hamming distance) his outcome $x$ with the string
      $y$ he receives from the prover, and measures how much time passed
      between the moment he sent the state to $P$ and the moment he
      receives $y$ from $P$.
    \item $V_1$ sends $x$ to $V_2$ so $V_2$ can also compare $y$ and
      $x$. $V_2$ also measures how long it took for the message to come back.
    \item If $x=y$ (or $d_H(x,y)\leq \gamma n$) and the time measured by the verifiers is lower
      than a certain fixed bound $\Delta t$ then the prover passes the
      protocol, which means that the verifiers accept that the prover is at his claimed position.
  \end{enumerate}
  \label{ptol:PV}
\end{ptol}

As mentioned in the introduction, if the prover is dishonest, it suffices to consider the scenario where there are two dishonest provers $B_1$ and $B_2$ who impersonate being at some claimed location. The protocol is secure against adversaries holding an entangled state $\rho_{B_1B_2}$ with one shot entanglement cost bounded by $d$ if the probability that the adversaries cheat the protocol decays exponentially with the length $n$ of $x$ (which is also the number of quantum system the verifiers send to the prover).

\begin{Def}
    In the general case when the winning condition on the prover's guess $y$ is $d_H(x,y)\leq \gamma n$, where $x$ is the verifiers' bit string and $\gamma \in [0,1[$ is the maximal QBER, the probability of cheating in PV is defined as
    \begin{align*}
        \lambda_{\textup{PV}}(n,d,\zeta,\gamma):= \max_{\rho_{V_1 M_1 M_2}} \max_{\{\mathcal{T}^\theta\}} \max_{\{\mathcal{F}^\theta\}} \tr\left(2^{-n}\sum_{\theta,x \in \{0,1\}^n} \sum_{\substack{y \in \{0,1\}^n \\ d_H(x,y)\leq \gamma n}} P_x^\theta \otimes T_y^\theta \otimes F_y^\theta \rho_{V_1 M_1 M_2}\right)
    \end{align*}
    where the first maximization is over all states compatible with the marginal on $V_1$, $P_x^\theta$, $T_x^\theta$ and $F_x^\theta$ are the measurement operators for $V_1$, $M_1$ and $M_2$ respectively, and where the second and the third maximisations are short hand for $2^n$ separate maximisations: for each $\theta$, $M_1$ and $M_2$ choose the POVMs which maximize $\lambda_{\textup{PV}}(n,d,\zeta,\gamma)$
\end{Def}

\begin{Def}
  PV is said to be $\alpha$-secure if there exists $\alpha>0$ and an integer
  $N\geq 1$ such that $\forall n\geq N$ the probability
  $\lambda_{\mathrm{PV}}(n,d,\zeta,\gamma)$ that dishonest provers pass the protocol is
  such that:
  \begin{align}
    \lambda_{\mathrm{\mathrm{PV}}}(n,d,\zeta,\gamma) \leq 2^{-\alpha n}.
  \end{align}
  Note that the value of $\alpha$ may depend on $d,\zeta$ and $\gamma$.
\end{Def}

In our case we limit the attack scheme by assuming that
the adversaries can only share a limited amount of entanglement
(assumptions \ref{Phys_assump}) and that they do \emph{not} use quantum
communication, but they have access to perfect and unlimited classical
communication. Moreover we will assume that the device-independent
assumption \ref{hyp} is satisfied in our model of attack.

\begin{lmm}
  In PV in the Noisy Entanglement model, where Bob1 and Bob2 share a
  state $\rho_{B_1 B_2}$ such that $E_{\textup{C,LOCC}}^{(1)}(\rho_{B_1 B_2}) \leq
  \log(d)$, if $V_1$'s device is such that $\zeta <1$ then the
  probability $\lambda_{\mathrm{PV}}(n,d,\zeta,\gamma)$ that Bob2 guesses a string $y \in
  \{0,1\}^n$ and $d_H(x,y) \leq \gamma n$,
  where $x$ is $V_1$'s outcome measurement, is upper bounded by
  \begin{align}
    \lambda_{\mathrm{PV}}(n,d,\zeta,\gamma) \leq B'(n,d,\zeta,\gamma)
  \end{align}
  where $B'(n,d,\zeta,\gamma)$ is defined in Theorem \ref{main_Thm}.
  \label{bound_NE}
\end{lmm}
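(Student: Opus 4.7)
The approach is to reduce $\lambda_{\mathrm{PV}}$ to $\lambda_{\mathrm{NE}}$ and then to invoke Lemma~\ref{NE-WSE_Lemma}. The reduction itself has already been established as Theorem~14 of~\cite{RG15}, so the proof will essentially be a one-line application once the hypotheses are matched, namely: identical adversarial resources (shared entanglement $\rho_{B_1B_2}$ with $E_{\textup{C,LOCC}}^{(1)}(\rho_{B_1B_2})\leq \log(d)$, unlimited classical communication, no quantum communication during the time-critical step), and the same i.i.d./memoryless device assumptions on the honest side ($V_1$'s main device producing $\sigma^{\otimes n}$ and measuring with tensor-product POVMs certified by the CHSH test of Protocol~\ref{Test_ptol}).

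Conceptually, the reduction works because cheating PV is at least as hard as cheating WSE in the noisy-entanglement model: in PV both cheating provers $M_1$ and $M_2$ must each output a string $y$ within Hamming distance $\gamma n$ of $V_1$'s outcome $x$, whereas in WSE-NE only Bob2 has to produce such a guess. Thus any successful PV strategy yields a successful WSE-NE strategy by setting Bob1 $= M_1$, Bob2 $= M_2$, discarding $M_1$'s output, and using $M_2$'s POVM $\{F_y^\theta\}$ as the POVM of Bob2. The timing constraint in PV (both answers must arrive before $\Delta t$) translates into the restriction in WSE-NE that Bob1 and Bob2 may only exchange classical messages after Bob1 receives the quantum state and before Bob2 receives $\theta$; both settings thus allow exactly the same class of LOCC-plus-shared-entanglement strategies.

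Putting these pieces together gives
\[ \lambda_{\mathrm{PV}}(n,d,\zeta,\gamma) \;\leq\; \lambda_{\mathrm{NE}}(n,d,\zeta,\gamma) \;\leq\; B'(n,d,\zeta,\gamma), \]
where the first inequality is the PV-to-WSE reduction and the second is Lemma~\ref{NE-WSE_Lemma}. The only step requiring genuine care is checking that the reduction of~\cite{RG15} still applies in the device-independent, memoryless-device setting; since that reduction acts only on the cheating side and leaves the honest verifiers' devices untouched, the device-independent guarantees supplied by the CHSH-estimate $\zeta$ (via Lemma~\ref{Bound_Eps_+}) carry through unchanged. I do not expect any substantive obstacle beyond this bookkeeping, which is precisely why the main technical content of the paper has been concentrated in Theorem~\ref{main_Thm} and Lemma~\ref{NE-WSE_Lemma} rather than here.
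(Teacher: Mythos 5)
Your proposal is correct and matches the paper's proof: both invoke the reduction of Theorem~14 of~\cite{RG15} to obtain $\lambda_{\mathrm{PV}}(n,d,\zeta,\gamma)\leq \lambda_{\mathrm{NE}}(n,d,\zeta,\gamma)$ and then conclude via Lemma~\ref{NE-WSE_Lemma}. The additional remarks you make about matching the adversarial resources and about the reduction acting only on the cheating side are consistent with the paper's own discussion preceding the lemma.
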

\begin{proof} We use the proof in \cite[Theorem 14]{RG15},
which reduces the security of PV under the assumption that there is no quantum communication between cheaters, to the security of Weak String
Erasure in the noisy entanglement model, in other words it proves that $\lambda_{\mathrm{PV}}(n,d,\zeta,\gamma)\leq
\lambda_{NE}(n,d,\zeta,\gamma)$ and then using Lemma \ref{NE-WSE_Lemma} we conclude
the proof. 
\end{proof}

If $\gamma$ is such that it satisfies the condition of Lemma
\ref{asymp_lmm_2} this bound proves the security of PV since $B'(n,d,\zeta,\gamma)$ decays exponentially (see figure \ref{fig:noise_tolerance}). The security proof is independent of the implementation of the protocol. Moreover, to allow an honest prover to pass the protocol even when there is some noise in the quantum channel between $V_1$ and $P$ or if honest prover's measurements are not perfect means that we allow the prover $P$ to guess the string $x$ with some error quantified by the Hamming distance. This choice obviously makes the protocol easier to cheat on when $P$ is dishonest, but according to Lemmas \ref{bound_NE} and \ref{asymp_lmm_2} the protocol is still secure if the fraction of errors $\gamma$ allowed in the guessed string is small enough.

\begin{figure}[ht]
    \includegraphics[scale=0.7]{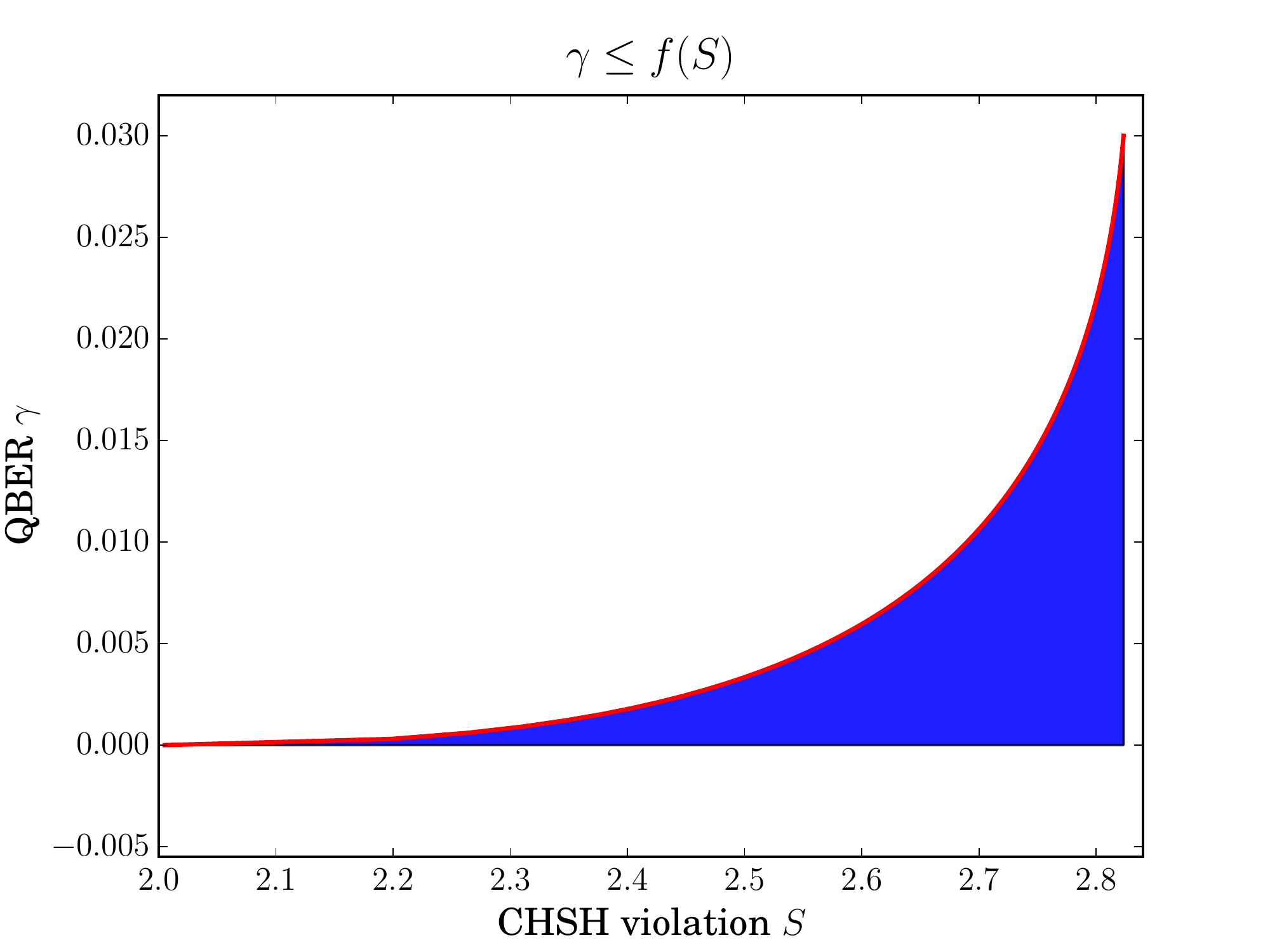}
    \caption{QBER $\gamma$ allowed in function of the CHSH violation $S$ obtained in the testing procedure when $n \rightarrow \infty$ and $d$ finite. The blue region is the secure region \emph{i.e.}~the region where the bound $B'(n,d,\zeta,\gamma)$ decays exponentially in $n$ for a fixed $d$.}
    \label{fig:noise_tolerance}
\end{figure}

Note that PV is still secure if we allow
$V_1$'s device to send the string $x$ to the prover after $V_1$ makes
the measurements on his state. $V_1$ just has to wait long enough
before measuring his state. Then dishonest provers cannot use this
information since there is a time constraint on their answers.

\section{Conclusion}

By proving that the probability of winning an imperfect guessing game decays exponentially with the length of the string which has to be guessed, we prove the security of Weak String Erasure and Position Verification even when the devices used to create or measure states are not trusted, i.e. in the device-independent scenario. This implies the device-independent security of any two-party cryptography protocols that can be made from WSE and classical communications, and certain position-based cryptographic tasks. Unfortunately, our proofs are not fully device-independent since we still assume that the honest party's devices (Alice's or $V_1$'s ones) are memoryless (the i.i.d.~assumption). Nevertheless, this is an improvement over previous known results, since until now bounds made under the memoryless devices were proportional to the dimension of the memory held by the adversary, whereas our bounds are proportional to the square root of this dimension. The improvement was made thanks to the development of new techniques, which directly deal with the adversary's quantum memory.

\begin{acknowledgements}
JR, JH and SW are funded by STW, NWO VIDI and ERC Starting Grant. LPT acknowledges the support of National Research Foundation, Prime Minister’s Office, Singapore and the Ministry of Education, Singapore under the Research Centres of Excellence programme. 
JK is supported by Ministry of Education, Singapore and the European Research Council (ERC Grant Agreement 337603).
\end{acknowledgements}

\appendix

\section{Technical Lemma}
In the proof of the Key Lemma to be presented below, we will need the following result. Similar results about norm of sums of operators have been obtained by Kittaneh~\cite{K97}, see also~\cite[Lemma 2]{TFKW13}.
\begin{lmm}
\label{lmm:normofsum}
If $A_1, A_2, \ldots, A_N$ are positive semidefinite operators, then
  \begin{align}
    \bigg\|{\sum_{i \in [N]} A_i}\bigg\| \leq \max_{j\in[N]} \sum_{i \in [N]}\, \left\|\sqrt{A_i}\sqrt{A_j}\right\|,
  \end{align}
where $[N]:=\{1, \ldots, N\}$.
\end{lmm}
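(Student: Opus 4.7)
The plan is to reduce the bound for the operator $\sum_i A_i$ on $\mathcal{H}$ to a bound on the induced $\infty$-norm of an $N \times N$ matrix of scalars whose entries are the numbers $\|\sqrt{A_i}\sqrt{A_j}\|$. The reduction relies on a standard ``stinespring-like'' block construction: define the operator
\begin{equation}
B \;=\; \sum_{i \in [N]} \ket{i} \otimes \sqrt{A_i}\,: \;\mathcal{H}\to \mathbb{C}^N \otimes \mathcal{H},
\end{equation}
so that, using that the $\sqrt{A_i}$ are self-adjoint,
\begin{equation}
B^\dagger B \;=\; \sum_{i,j} \braket{j}{i}\,\sqrt{A_j}\sqrt{A_i} \;=\; \sum_i A_i, \qquad
BB^\dagger \;=\; \sum_{i,j} \ketbra{i}{j}\otimes \sqrt{A_i}\sqrt{A_j}.
\end{equation}
Since $\|B^\dagger B\| = \|BB^\dagger\|$, it suffices to bound the operator norm of the block-matrix $BB^\dagger$ whose $(i,j)$ entry is the block $\sqrt{A_i}\sqrt{A_j}$.

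The next step would be to majorize the block matrix by a scalar matrix. Let $\tilde M \in \mathbb{R}^{N\times N}$ be defined by $\tilde M_{ij} := \|\sqrt{A_i}\sqrt{A_j}\|$. For an arbitrary vector $v = \sum_i \ket{i}\otimes v_i$ in $\mathbb{C}^N \otimes \mathcal{H}$, one applies the triangle inequality and the definition of the operator norm block-wise:
\begin{equation}
\bigl\|BB^\dagger v\bigr\|^2 \;=\; \sum_i \Big\|\sum_j \sqrt{A_i}\sqrt{A_j}\,v_j\Big\|^2 \;\leq\; \sum_i \Big(\sum_j \tilde M_{ij}\,\|v_j\|\Big)^2 \;=\; \bigl\|\tilde M\,\tilde v\bigr\|_2^2,
\end{equation}
where $\tilde v \in \mathbb{R}^N$ has entries $\tilde v_i = \|v_i\|$ and satisfies $\|\tilde v\|_2 = \|v\|$. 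This yields $\|BB^\dagger\| \leq \|\tilde M\|$ (the scalar operator/induced $2$-norm of $\tilde M$).

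Finally, I would pass from the induced $2$-norm to the induced $\infty$-norm using symmetry. The matrix $\tilde M$ is symmetric because $\|\sqrt{A_i}\sqrt{A_j}\| = \|(\sqrt{A_i}\sqrt{A_j})^\dagger\| = \|\sqrt{A_j}\sqrt{A_i}\|$, hence $\tilde M_{ij}=\tilde M_{ji}$, and in particular $\|\tilde M\|_1^{\mathrm{I}} = \|\tilde M\|_\infty^{\mathrm{I}}$. The standard interpolation inequality $\|\tilde M\|_2^{\mathrm{I}} \leq \sqrt{\|\tilde M\|_1^{\mathrm{I}}\cdot\|\tilde M\|_\infty^{\mathrm{I}}}$ then gives $\|\tilde M\| \leq \|\tilde M\|_\infty^{\mathrm{I}} = \max_{j}\sum_i \|\sqrt{A_i}\sqrt{A_j}\|$, which combined with the preceding steps yields the claim. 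The main technical point — and the only step that requires some care — is the block-to-scalar majorization in the middle paragraph; the rest is just the standard manipulation of induced matrix norms already set up in the preliminaries of the paper.
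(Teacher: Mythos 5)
Your proof is correct, and it follows the same overall route as the paper: the dilation $B=\sum_i \ket{i}\otimes\sqrt{A_i}$ with $\|B^\dagger B\|=\|BB^\dagger\|$, the majorization of the block matrix $BB^\dagger$ by the scalar matrix of norms, and the final interpolation $\|\tilde M\|_2^{\mathrm{I}}\leq\sqrt{\|\tilde M\|_1^{\mathrm{I}}\|\tilde M\|_\infty^{\mathrm{I}}}$ together with symmetry of $\tilde M$. The one place where you genuinely diverge is the middle (block-to-scalar) step. The paper establishes $\|K\|\leq\|L\|$ by taking a top eigenvector $\ket{\alpha}=\sum_j\alpha_j\ket{j}\ket{e_j}$ of $BB^\dagger$, splitting $K$ into Hermitian pieces $B_{jk}=\ketbra{j}{k}\otimes\sqrt{A_j}\sqrt{A_k}+\ketbra{k}{j}\otimes\sqrt{A_k}\sqrt{A_j}$, and bounding the quadratic form entrywise via $|\bra{e_j}\sqrt{A_j}\sqrt{A_k}\ket{e_k}|\leq\|B_{jk}\|=\|\sqrt{A_j}\sqrt{A_k}\|$; this requires the eigenbasis expansion of each $B_{jk}$ and a Cauchy--Schwarz step. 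You instead act with $BB^\dagger$ on an arbitrary vector $v=\sum_i\ket{i}\otimes v_i$ and apply the triangle inequality blockwise to get $\|BB^\dagger v\|\leq\|\tilde M\tilde v\|_2$ with $\|\tilde v\|_2=\|v\|$. Your version is shorter, avoids the detour through the Hermitian blocks and the phase conventions on the $\alpha_j$, and proves the same intermediate inequality; the paper's quadratic-form argument buys nothing extra here. Both then conclude identically via the induced-norm interpolation.
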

\begin{proof}
Let $K$ be an $N\times N$ block matrix with entries $K_{ij}=\sqrt{A_i}\sqrt{A_j}$ and $L$ is an $N\times N$ matrix of entries $L_{ij}=\|\sqrt{A_i}\sqrt{A_j}\|$, we first show that
\begin{equation}
\label{eq:rewrite}
\bigg\|{\sum_{i \in [N]} A_i}\bigg\| = \|K\| \leq \|L\|\,.
\end{equation}
Defining $\tilde K := \sum_{j} \ket{j} \otimes \sqrt{A_{j}}$, a direct calculation reveals
\begin{align}
\tilde K^{\dagger} \tilde K =  \sum_{j} A_{j}  \quad \textup{and} \quad \tilde K \tilde K^{\dagger} = \sum_{jk} \ketbra{j}{k} \otimes \sqrt{A_{j}} \sqrt{A_{k}} = K\,,
\end{align}
from which follows the first equality since the operator norm satisfies $\|\tilde K^{\dagger} \tilde K\| = \|\tilde K \tilde K^{\dagger} \|$. We are thus left with proving $\| \tilde K \tilde K^{\dagger} \| \leq \| L \|$ where now we rewrite $L$ in the following form
\begin{equation}
L = \sum_{jk} \ketbra{j}{k} \otimes \| \sqrt{A_{j}} \sqrt{A_{k}} \|\,.
\end{equation}
Since the operator norm of a positive semidefinite matrix corresponds to its largest eigenvalue, it suffices to prove that the largest eigenvalue of $\tilde K \tilde K^{\dagger}$ is not greater than the largest eigenvalue of $L$. Let $\ket{\alpha}$ be an eigenvector corresponding to the largest eigenvalue of $\tilde K \tilde K^{\dagger}$ and write it as
\begin{equation}
\ket{\alpha} = \sum_{j} \alpha_{j} \ket{j} \ket{e_{j}},
\end{equation}
where $\alpha_{j}$ are real and positive and $\ket{e_{j}}$ are arbitrary but normalised. Then
\begin{equation}
\| \tilde K \tilde K^{\dagger} \| = \bramatketq{\alpha}{\tilde K \tilde K^{\dagger}} = \sum_{jk} \alpha_{j} \alpha_{k} \bramatket{e_{j}}{\sqrt{A_{j}} \sqrt{A_{k}}}{e_{k}}.
\end{equation}
Now it suffices to prove that this can be upper bounded by $\bramatketq{\alpha'}{L}$ for
\begin{equation}
\ket{\alpha'} = \sum_{j} \alpha_{j} \ket{j},
\end{equation}
which implies 
\begin{equation}
\|K\|=\| \tilde K \tilde K^{\dagger} \| = \bramatketq{\alpha}{\tilde K \tilde K^{\dagger}} \leq \bramatketq{\alpha'}{L} \leq \|L\|.
\end{equation}
To show $\bramatketq{\alpha}{K} \leq \bramatketq{\alpha'}{L}$, we begin by rewriting $K$ as 
\begin{align}
    K&=\sum_{jk} \ketbra{j}{k} \otimes \sqrt{A_j} \sqrt{A_k}\\
    &=\sum_{j<k} \underbrace{\ketbra{j}{k} \otimes \sqrt{A_j} \sqrt{A_k}+ \ketbra{k}{j} \otimes \sqrt{A_k} \sqrt{A_j}}_{=:B_{jk}}+\sum_j \ketbra{j}{j} \otimes A_j
\end{align}
This form makes hermitian matrices $B_{jk}$ and $\ketbra{j}{j} \otimes A_j$ appear in the sums. $K$ is positive semidefinite so $\bramatket{\alpha}{K}{\alpha}=|\bramatket{\alpha}{K}{\alpha}|$ and,
\begin{align}
    |\bramatketq{\alpha}{K}|&=|\sum_{j \neq k} \alpha_j \alpha_k \bra{j}\bra{e_j} B_{jk}\ket{k}\ket{e_k}+ \sum_{j} \alpha_j^2 \underbrace{\bramatketq{e_j}{A_j}}_{\leq \|A_j\|}|\\
    & \leq\sum_{j \neq k} \alpha_j \alpha_k |\bra{j}\bra{e_j} B_{jk}\ket{k}\ket{e_k}|+ \sum_{j} \alpha_j^2  \|A_j\|. 
    \label{eq:bramaket_alpha}
\end{align}
Now by decomposing the vectors $\ket{j}\ket{e_j}=\sum_l \beta^j_l \ket{\beta_l}$ and $\ket{k}\ket{e_k}=\sum_m \beta^k_m \ket{\beta_m}$ in an eigenbasis of $B_{jk}$ noted $\{\ket{\beta_i}\}_i$ we get,
\begin{align}
    |\bra{j}\bra{e_j}B_{jk}\ket{k}\ket{e_k}|&=|\sum_{lm} \beta_l^{j*} \beta_m^k \bramatket{\beta_l}{B_{jk}}{\beta_m}|\\
    &=|\sum_{l} \beta_l^{j*} \beta_l^k \lambda_l|
\end{align} 
where $\{\lambda_l\}_l$ are the eigenvalues of $B_{jk}$. Using the triangle inequality we have,
\begin{align}
    |\sum_{l} \beta_l^{j*} \beta_l^k \lambda_l|&\leq \sum_{l} |\beta_l^{j*}| |\beta_l^k| |\lambda_l|\\
    &\leq \max_i |\lambda_{i}| \underbrace{\sum_{l} |\beta_l^{j*}| |\beta_l^k|}_{\leq 1}\\
    &\leq \max_i |\lambda_{i}| = \|B_{jk}\|. \label{eq:norm_B_jk}
\end{align}
It is easy to check that $\|B_{jk}\|=\|\sqrt{A_j}\sqrt{A_k}\|$. Using that and~\eqref{eq:norm_B_jk} in the inequality~\eqref{eq:bramaket_alpha} we have,
\begin{align}
    \|K\|=\bramatket{\alpha}{K}{\alpha}\leq \sum_{j k} \alpha_j \alpha_k \|\sqrt{A_j}\sqrt{A_k}\|=\bramatketq{\alpha'}{L}\leq \|L\|
\end{align}
which gives the desired inequality $\|K\|\leq \|L\|$.

Using H\"{o}lder's inequality (Lyapunov's inequalities) for induced $p$-norms, we have
\begin{equation}
    \|L\|=\|L\|_2^{\mathrm{I}}\leq \left(\|L\|_1^{\mathrm{I}} \cdot \|L\|_\infty^{\mathrm{I}}\right)^{1/2},
\end{equation}
where the norms on the right hand side are equal to the maximum absolute row or column sums
\begin{align}
    \|L\|_1^{\mathrm{I}}=\max_j \sum_i \|\sqrt{A_i}\sqrt{A_j}\|, \\
    \|L\|_\infty^{\mathrm{I}}=\max_i \sum_j \|\sqrt{A_i}\sqrt{A_j}\|.
\end{align}
The Lemma follows since these two norms are equal for hermitian matrices.
\end{proof}

\section{Proof of the Key Lemma}
The main content of this appendix is a detailed proof of the Key Lemma presented in the main text. Specifically, we prove a bound on the probability that Bob wins the game, only depending on a quantity $\zeta$ that Alice can estimate experimentally, Bob's memory size $d$, and $n$ which is the number of rounds played in the game. 

We split this proof into four steps. In Step 1 we analyse how Jordan's Lemma permits us to conveniently express the effective absolute anti-commutator of Alice's measurements. In Step 2 we derive a bound on the winning probability expressed in terms of what we call "operator overlap", then in Step 3 we bound this overlap by a simpler expression depending on the effective anti-commutator. We finish the proof in Step 4 by replacing, in the previous simple bound on the overlap, the effective anti-commutator by a quantity that Alice can estimate experimentally.

For the reader's convenience, we have included Table~\ref{tab:symbols} which explains the symbols used in the proof.

\subsection*{Step 1: Alice's i.i.d.~state-preparation and measurement device}
\label{sec:Jordan}
In this section we use Jordan's Lemma to rewrite Alice's measurement operators and the absolute effective anti-commutator.

We assume that the devices used by Alice to prepare and measure satisfy the i.i.d.~assumption, i.e.~the state produced in $n$ rounds is of the form $\rho_{AB}=\sigma_{AB}^{\otimes n}$ and the measurement corresponding to input $\theta\in\{0,1\}^n$ can be written as $\{P^\theta_x=\bigotimes_k P^{\theta_k}_{x_k}:x\in\{0,1\}^n\}$, where $\{P_0^0,P_1^0\}$ and $\{P_0^1,P_1^1\}$ are some unknown (but fixed) binary measurements. It is worth stressing that this implies that the reduced state on Alice is of product form, $\rho_{A} = \sigma_{A}^{\otimes n}$, \emph{regardless} of how Bob manipulates his subsystem. We make no assumptions on the dimensions of the system (except that they are finite dimensional).


By Naimark's dilation Theorem we can without loss of generality assume that the measurements $\{P_0^0,P_1^0\}$ and $\{P_0^1,P_1^1\}$ are projective,
which allows us to apply Jordan's Lemma ~\cite{NWZ09,Reg06}. 
The projectors representing the first outcome $P_0^0$ and $P_0^1$ can be simultaneously decomposed into orthogonal projections of rank at most one, projecting on either one or two dimensional subspaces invariant under the action of both projectors (a subspace $W$ is invariant under the linear opearator $T$ if and only if $TW\subseteq W$). Moreover we only get two-dimensional blocks when the projectors have non-trivial overlap, i.e.~they are neither orthogonal nor identical:
\begin{equation}
P_0^0 = \sum_{j\in\mathcal{J}} P_{0|j}^0, \ \ P_0^1 = \sum_{j\in\mathcal{J}} P_{0|j}^1,\label{eq:Jordan}
\end{equation}
where $P_{0|j}^a$ ($a\in \{0,1\}$) are such that $\forall j,j'\ P_{0|j}^a P_{0|j'}^a=\delta_{jj'}P_{0|j}^a$ and are acting on a subspace of dimension one or two. Then we have,

\begin{equation}
P_1^0 = \sum_{j\in\mathcal{J}} (S_j-P_{0|j}^0), \ \ P_1^1 = \sum_{j\in\mathcal{J}} (S_j-P_{0|j}^1).
\end{equation} 
where $S_j$ is a projection which projects on the subspace where $P_{0|j}^a$ and $P_{1|j}^a$ act.
Note that the number of non-zero summands in~\eqref{eq:Jordan} is equal to the rank of $P_0^0$ or $P_0^1$ respectively.

A convenient basis of the Hilbert space can be chosen as follows. Each index $j\in\mathcal{J}$ corresponds to either a one or two dimensional subspace. For the two dimensional subspaces indexed by $j$ in $\mathcal{J}$ where \emph{both} $P_{0|j}^0$ and $P_{0|j}^1$ are nonzero, we pick the orthonormal eigenbasis of $P_{0|j}^0$, namely $|0_{|j}^{0}\rangle$ and $|1_{|j}^{0}\rangle$, as the basis for these subspaces
\begin{equation}
P_{0|j}^0 |0_{|j}^{0}\rangle = |0_{|j}^{0}\rangle, \ \ P_{0|j}^0 |1_{|j}^{0}\rangle = 0.
\end{equation}
Moreover, one can pick these basis vectors (by including a phase factor if necessary) such that the eigenstates $|0_{|j}^{1}\rangle$ and $|1_{|j}^{1}\rangle$ of $P_{0|j}^1$, which satisfy
\begin{equation}
P_{0|j}^1 |0_{|j}^{1}\rangle = |0_{|j}^{1}\rangle, \ \ P_{0|j}^1 |1_{|j}^{1}\rangle = 0,
\end{equation}
are related to those of $P_{0|j}^0$ by
\begin{equation}
|0_{|j}^{1}\rangle = \cos\beta_j |0_{|j}^{0}\rangle + \sin\beta_j |1_{|j}^{0}\rangle, \ \ |1_{|j}^{1}\rangle = \sin\beta_j |0_{|j}^{0}\rangle - \cos\beta_j |1_{|j}^{0}\rangle,
\label{eq:2dblock}
\end{equation}
for some angle $\beta_j\in[0,\pi/2]$. For the one dimensional subspaces (also indexed by $j\in\mathcal{J}$) we define $|0_{|j}^{0}\rangle$ being a unit vector, and $|1_{|j}^{0}\rangle$ is the null vector. Then we define $|0_{|j}^{1}\rangle$ and $|1_{|j}^{1}\rangle$ as previously but with  $\beta_j=0$ or $\beta_{j} = \pi/2$. In summary, since we have defined a basis for each $j\in\mathcal{J}$, taking the direct sum gives a basis for the whole Hilbert space. Any binary (projective) measurement device admits a characterization through the angles $\beta_j\in[0,\pi/2]$ for $j\in\mathcal{J}$ and this characterization turns out to be sufficient for our purposes.

\label{sec:avgeplus}
The previous block decomposition allows us to conveniently compute the effective absolute anticommutator
 defined as 
$\epsilon_+:=\frac{1}{2}\tr(|\{A_0,A_1\}|\sigma_A)$ where $A_\theta:=P_0^\theta-P_1^\theta$ for $\theta\in\{0,1\}$. The word "effective" means that $\epsilon_+$ depends not only on Alice's measurements, but also on the state on which the measurements act. Under Jordan's Lemma, the absolute anticommutator becomes
\begin{equation}
  |\{A_0,A_1\}|=|\sum_{j}\{A_{0_{|j}},A_{1_{|j}}\}|=\sum_{j} 2|\cos(2\beta_j)|S_j
\end{equation}
with $A_{\theta_{|j}}:=P_{0|j}^\theta-P_{1|j}^\theta$ and $S_j$ being the orthogonal projections defined above, where the absolute anticommutator of a two dimensional block $j$ is computed using~\eqref{eq:2dblock} and that of a one dimensional block follows from our definition of $\beta_j=0$ in Step 1. Let $p_{j} := \tr(S_j\sigma_A)$ be the probability of $\sigma_{A}$ being projected in the $j$-th block, then, the absolute effective commutator can be written as
\begin{equation}
  \epsilon_+=\sum_j p_j|\cos(2\beta_j)| = \sum_j p_j\epsilon_j,\label{eq:epsilonplus}
\end{equation}
where $\epsilon_j:=|\cos(2\beta_j)|$ is the absolute effective anticommutator of the block $j$. It is worth pointing out that for qubit observables there is no notion of "effectiveness", i.e.~the incompatibility is fixed by the observables and does not depend on the state.

Also, the previous decomposition of Alice measurements enables the $n$ run projectors to be block diagonalized as
\begin{equation}
P_x^\theta = \bigotimes_{k=1}^n P_{x_k}^{\theta_k} = \bigotimes_{k=1}^n \sum_{b_k\in\mathcal{J}} P_{x_k|b_k}^{\theta_k} = \sum_{b\in\mathcal{J}^n} P_{x|b}^{\theta},\label{eq:block_meas}
\end{equation}
where $P_{x|b}^\theta := \bigotimes_{k=1}^n P_{x_k|b_k}^{\theta_k}$, and $\mathcal{J}$ is the set of indices which label blocks and $\mathcal{J}^n:=\mathcal{J}^{\times n}$. We denote the set of projectors associated with this direct sum decomposition by $\{S_{b}\}_{b \in \mathcal{J}^n}$, where $S_b:=\bigotimes_{k=1}^nS_{b_k}$. For each $k$, we have $P_{x_k|b_k}^{\theta_k}P_{x'_k|b'_k}^{\theta_k}=\delta_{x_k,x'_k}\delta_{b_k,b'_k}P_{x_k|b_k}^{\theta_k}$, and $P_{x_k|b_k}^{\theta_k}$ orthogonal to $P_{x'_k|b'_k}^{\theta'_k}$ whenever  $b_k\neq b'_k$. The analysis of the guessing game will rest on these orthogonality relations and the set of angles $\beta_j$ defined above.

\begin{table}
\begin{center}
\begin{tabular}{| c | c | c |}
\hline
 \textbf{Variable} & \textbf{Range} & \textbf{Meaning} \\ \hline
 $n$ & $\mathbb{N}$ & total number of (measurement) runs \\  
 $k$ & $[n]$ & index of the run (subscript) or \\
 & & the classical information of Bob (superscript)\\  
 $\theta$ & $\{0,1\}^n$ & measurement string \\
 $\theta_k$ & $\{0,1\}$ & $k^{\textup{th}}$ measurement \\
 $x$ & $\{0,1\}^n$ & output string \\
 $x_k$ & $\{0,1\}$ & $k^{\textup{th}}$ output \\
 $j$ & $\mathcal{J}$ & index of Jordan's Lemma decomposition \\  
 $b$ & $\mathcal{J}^n$ & vector indexing block combination \\
 $b_k$ & $\mathcal{J}$ & $k^{\textup{th}}$ element of $b$ \\
 $p_{(\cdot)}$ & [0,1] & probability of $(\cdot)$ (depending on context)\\
 $\beta_{b_k}$ & $[0,\pi/2]$ & angle of Alice's binary measurement in block $b_k$ \\
 $\epsilon_{b_k}$ & [0,1] & absolute effective anticommutator in block $b_k$\\ \hline
\end{tabular}
\end{center}
\caption{Table of symbols.}
\label{tab:symbols}
\end{table}

\subsection*{Step 2: From guessing probability to "operator overlaps"}
\label{sec:intuition}

The goal of this section is to bound Bob's wining probability in term of the overlap $\Big\|\sqrt{\Pi^{\theta',k}_{b}}\sqrt{\Pi^{\theta,k}_{b}} \Big\|$. To be precise we show that,
\begin{lmm}
    Bob's wining probability $\lambda(n,d,\zeta)$ defined in equation \eqref{Def_lambda} is bounded as follow:
    \begin{equation}
        \lambda(n,d,\zeta) \leq  \max_{\left\{p_k,\rho_{AB'}^k\right\}}\max_{\{\mathcal{F}^{\theta,k}\}} \sum_{k,b}p_kp_{b|k} \max_{\theta'} 2^{-n} \sum_\theta   \left\|\sqrt{\Pi^{\theta',k}_{b}}\sqrt{\Pi^{\theta,k}_{b}} \right\| 
    \end{equation}
\end{lmm}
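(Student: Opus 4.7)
The plan is to obtain the bound through three successive reductions of the definition~\eqref{Def_lambda}. First I would exploit the qqc structure of $\rho_{AB'K}$: since $K$ is classical, write $\rho_{AB'K}=\sum_k p_k\,\rho_{AB'}^k\otimes \ketbra{k}{k}$ and correspondingly decompose Bob's POVM on $B'K$ as $F_x^\theta=\sum_k F_x^{\theta,k}\otimes\ketbra{k}{k}$, where each $\mathcal{F}^{\theta,k}=\{F_x^{\theta,k}\}_x$ is a POVM on $B'$ alone. This turns the objective into $\sum_k p_k\cdot 2^{-n}\sum_\theta \tr\bigl(\Pi^{\theta,k}\rho_{AB'}^k\bigr)$ with $\Pi^{\theta,k}:=\sum_x P_x^\theta\otimes F_x^{\theta,k}\ge 0$, and replaces the max over qqc-states $\rho_{AB'K}$ and POVMs $\{\mathcal{F}^\theta\}$ by the max over ensembles $\{p_k,\rho_{AB'}^k\}$ and per-$k$ POVMs $\{\mathcal{F}^{\theta,k}\}$.

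Second, I would invoke the Jordan decomposition established in Step~1: $P_x^\theta=\sum_b P_{x|b}^\theta$ where the block projectors $\{S_b\}_{b\in\mathcal{J}^n}$ on Alice are mutually orthogonal and satisfy $S_b\,P_{x|b'}^\theta=\delta_{bb'}P_{x|b}^\theta$. Defining $\Pi^{\theta,k}_b:=\sum_x P_{x|b}^\theta\otimes F_x^{\theta,k}$ one has $\Pi^{\theta,k}=\sum_b\Pi^{\theta,k}_b$, and because $\Pi^{\theta,k}_b$ is supported on $S_b\otimes\id$, the cyclic property of the trace yields
\begin{align}
\tr\bigl(\Pi^{\theta,k}_b\,\rho_{AB'}^k\bigr)=\tr\bigl(\Pi^{\theta,k}_b\,(S_b\otimes\id)\rho_{AB'}^k(S_b\otimes\id)\bigr)=p_{b|k}\,\tr\bigl(\Pi^{\theta,k}_b\,\rho_{AB'}^{k,b}\bigr),
\end{align}
where $p_{b|k}\,\rho_{AB'}^{k,b}:=(S_b\otimes\id)\rho_{AB'}^k(S_b\otimes\id)$, with $p_{b|k}=\tr(S_b\rho_A^k)$ and $\rho_{AB'}^{k,b}$ the normalized conditional state.

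Third, for each $(k,b)$ I would use that $\Pi^{\theta,k}_b\ge 0$ and $\rho_{AB'}^{k,b}$ is a density operator to bound $\sum_\theta\tr\bigl(\Pi^{\theta,k}_b\,\rho_{AB'}^{k,b}\bigr)\le\bigl\|\sum_\theta\Pi^{\theta,k}_b\bigr\|$, and then apply the technical Lemma~\ref{lmm:normofsum} to the positive semidefinite family $\{\Pi^{\theta,k}_b\}_\theta$, obtaining $\bigl\|\sum_\theta\Pi^{\theta,k}_b\bigr\|\le\max_{\theta'}\sum_\theta\bigl\|\sqrt{\Pi^{\theta',k}_b}\sqrt{\Pi^{\theta,k}_b}\bigr\|$. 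Summing over $(k,b)$ with weights $p_kp_{b|k}$ reproduces the right-hand side of the Lemma; note that pulling $\max_{\theta'}$ inside the $(k,b)$-sum only enlarges the bound, which is exactly what the statement permits.

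The main obstacle I anticipate is the bookkeeping around the block decomposition: one has to verify that $(S_b\otimes\id)$ genuinely isolates the block-$b$ contribution on both sides of the trace, and that the family $\{p_{b|k},\rho_{AB'}^{k,b}\}_b$ is a well-defined ensemble for each $k$. A subtle point is that although the iid assumption fixes $\rho_A=\sigma_A^{\otimes n}$, the conditional reduced states $\rho_A^k=\tr_{B'}(\rho_{AB'}^k)$ may depend on $k$ because Bob's channel from $B$ to $B'K$ can correlate the classical register with $A$; this is why the statement is phrased with $p_{b|k}$ rather than a $k$-independent $p_b$. Once these identifications are in place the remainder of the argument is a direct concatenation of the Naimark/Jordan setup of Step~1 with Lemma~\ref{lmm:normofsum}.
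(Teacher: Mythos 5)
Your proposal is correct and follows essentially the same route as the paper: expand the classical register $K$ to get per-$k$ POVMs on $B'$, insert the Jordan block decomposition to condition on $b$ with weights $p_{b|k}$, bound each trace against the operator norm of $\sum_\theta 2^{-n}\Pi_b^{\theta,k}$, and finish with Lemma~\ref{lmm:normofsum}. Your explicit justification of the step $\tr\bigl(\Pi^{\theta,k}_b\rho_{AB'}^k\bigr)=p_{b|k}\tr\bigl(\Pi^{\theta,k}_b\rho_{AB'}^{k,b}\bigr)$ via the support of $\Pi^{\theta,k}_b$ on $S_b\otimes\id$ is in fact slightly more careful than the paper's presentation, and your remark on the $k$-dependence of $p_{b|k}$ correctly anticipates the averaging argument used later in Step~4.
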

\begin{proof}
Since we assume that the quantum memory of Bob is bounded he cannot store the entire register $B$ received from Alice. More specifically, according to the bounded storage model he must 
immediately input the register $B$ into an encoding map which outputs a quantum register $B'$ (whose dimension is bounded by $d$) and a classical register $K$ (of arbitrary size). 
The joint state between Alice and Bob is then a qqc-state $\rho_{AB'K}=\sum_k p_k\rho_{AB'}^k\otimes|k\rangle\langle k|$ whose marginal remains i.i.d.~$\rho_A=\tr_{B'K}(\rho_{AB'K})=\sigma_A^{\otimes n}$. Once Alice has measured her part of the system, Bob is told the choice of her measurements represented by $\theta \in \{0, 1\}^{n}$ and is asked to guess the string of outcomes. We take the success probability given by~\eqref{Def_lambda} and expand the classical register $K$ to obtain
\begin{align}
\lambda(n,d,\zeta) &=\max_{\substack{\rho_{AB'K} \\ \mathrm{qqc}}}
  \max_{\{\mathcal{F}^\theta\}} \tr\left(2^{-n}
  \sum_{\theta,x \in \{0,1\}^n} P_x^\theta
  \otimes F_x^{\theta} \quad \rho_{AB'K}\right)\\
&= \max_{\left\{p_k,\rho_{AB'}^k\right\}}\max_{\{\mathcal{F}^\theta\}}  \sum_kp_k\tr \left(\sum_{\theta,x}2^{-n} P^{\theta}_x\otimes F^{\theta}_x \rho_{AB'}^k\otimes\ketbra{k}{k}_K \right) \\
 &= \max_{\left\{p_k,\rho_{AB'}^k\right\}}\max_{\{\mathcal{F}^{\theta,k}\}}  \sum_kp_k\tr \left(\sum_{\theta,x}2^{-n} P^{\theta}_x\otimes F^{\theta,k}_x \rho_{AB'}^k \right), \label{eq:p_cheat_rewrite}
\end{align}
where $P_x^\theta$ are the measurement operators on Alice's side, and 
\begin{equation}
F^{\theta,k}_x:=\tr_K(F^\theta_x \mathds{1}_{B'}\otimes\ketbra{k}{k}_K)
\end{equation}
are $d$-dimensional measurement operators on Bob's side acting on $B'$, which depend both on his classical memory $k$ and the basis string $\theta$ received from Alice. The outer optimization is constrained to ensembles which yield the correct marginal on Alice's side, i.e.~$\tr_{B'}(\sum_kp_k\rho_{AB'}^k)=\sigma_A^{\otimes n}$. The inner maximization represents $|\Theta||K|$ independent maximizations each of which is over a POVM $\{F^{\theta,k}_x\}$. The rest of the proof will be concerned with upper bounding $\lambda(n,d,\zeta)$.

Inserting~\eqref{eq:block_meas} into \eqref{eq:p_cheat_rewrite} we get
\begin{equation}
\lambda(n,d,\zeta) = \max_{\left\{p_k,\rho_{AB'}^k\right\}}\max_{\{\mathcal{F}^{\theta,k}\}} \sum_kp_k\tr\left(\sum_{\theta,x}2^{-n} \sum_b P_{x|b}^{\theta}\otimes F_x^\theta \rho_{AB'}^k \right),
\end{equation}
Recall that $S_b$ represents the projection operator into the blocks indexed by $b\in\mathcal{J}^n$ . Define $\rho_{A_bB'}^k := (S_b\otimes\mathds{1}_{B'})\rho_{AB'}^k(S_b\otimes\mathds{1}_{B'})/p_{b|k}$ to be the normalized projections of $\rho_{A_bB'}^k$ into these various blocks with $p_{b|k}:=\tr\big( (S_b\otimes\mathds{1}_{B'}) \rho_{AB'}^k \big)$. Then
\begin{equation}
\lambda(n,d,\zeta) = \max_{\left\{p_k,\rho_{AB'}^k\right\}}\max_{\{\mathcal{F}^{\theta,k}\}} \sum_kp_k\sum_bp_{b|k} \tr\left(\sum_{\theta,x}2^{-n}   P_{x|b}^{\theta}\otimes F_x^{\theta,k} \rho_{A_bB'}^k \right), \label{eq:p_b|k}
\end{equation}
and for convenience let us denote $\Pi^{\theta,k}_b:=\sum_{x}  P_{x|b}^{\theta}\otimes F_x^{\theta,k}$ so that
 \begin{equation}
\lambda(n,d,\zeta) = \max_{\left\{p_k,\rho_{AB'}^k\right\}}\max_{\{\mathcal{F}^{\theta,k}\}} \sum_{k,b}p_kp_{b|k} \tr\left(\sum_\theta 2^{-n} \Pi^{\theta,k}_b \rho_{A_bB'}^k \right).
\end{equation}
%
Bounding each of the trace terms by its operator norm yields
\begin{equation}
\lambda(n,d,\zeta) \leq \max_{\left\{p_k,\rho_{AB'}^k\right\}}\max_{\{\mathcal{F}^{\theta,k}\}} \sum_{k,b}p_kp_{b|k} \left\|\sum_\theta 2^{-n} \Pi^{\theta,k}_b\right\|.
\label{eq:lambda_first_bound}
\end{equation}
For each $b,k$ the corresponding operator norm can be bounded using Lemma~\ref{lmm:normofsum} as follows
\begin{equation}
\left\|\sum_\theta 2^{-n} \Pi^{\theta,k}_{b} \right\| \leq 2^{-n} \max_{\theta'}  \sum_\theta  \left\|\sqrt{\Pi^{\theta',k}_{b}}\sqrt{\Pi^{\theta,k}_{b}} \right\|
\label{eq:bound_norm_sum}
\end{equation}
from which~\eqref{eq:lambda_first_bound} becomes
\begin{align}
\lambda(n,d,\zeta) \leq  2^{-n} \max_{\left\{p_k,\rho_{AB'}^k\right\}}\max_{\{\mathcal{F}^{\theta,k}\}} \underbrace{\sum_{k,b}p_kp_{b|k} \max_{\theta'}  \sum_\theta   \left\|\sqrt{\Pi^{\theta',k}_{b}}\sqrt{\Pi^{\theta,k}_{b}} \right\|}_{=:\Lambda} \label{1_step_bound}.
\end{align}
\end{proof}


\subsection*{Step 3: Bound on operator overlaps}
\label{bounding-overlap}

The goal of this section is to find a bound on $\Big\|\sqrt{\Pi^{\theta',k}_{b}}\sqrt{\Pi^{\theta,k}_{b}} \Big\|$ which holds \textit{independenly of} $k$. 
The superscript $k$ of the operator $\Pi^{\theta',k}_{b}$ reminds us that Bob's measurement might depend on his classical information. Here, we derive a bound which only depends on the dimension of his quantum system i.e. independent of $k$. Therefore, we will from now omit the superscript $k$ which represented the classical information of Bob. The following Lemma is the key towards the main result.
\begin{lmm}
\label{lmm:normbnd}
For all $\theta', \theta \in \{0, 1\}^{n}$ and $b \in \mathcal{J}^{n}$, we have
\begin{equation}
\left\|\sqrt{\Pi^{\theta'}_{b}}\sqrt{\Pi^\theta_{b}} \right\| \leq \min\left\{1,\sqrt{d} \prod_{k=1}^n\left(\max\{\cos\beta_{b_{k}},\sin\beta_{b_{k}}\}\right)^{w_k} \right\},
\end{equation}
where $d$ is the dimension of Bob's quantum memory, and $w:=\theta'\oplus \theta \in \{0,1\}^n$
\end{lmm}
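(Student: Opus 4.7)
The trivial bound $\leq 1$ is immediate: since $\sum_{x} P^{\theta}_{x|b} = S_{b}$ is a projector and each $F^{\theta}_{x} \leq \mathds{1}_{B'}$, we have $\Pi^{\theta}_{b} \leq S_{b} \otimes \mathds{1}_{B'} \leq \mathds{1}$. Hence $\|\sqrt{\Pi^{\theta}_{b}}\| \leq 1$ for each $\theta$, and submultiplicativity gives $\|\sqrt{\Pi^{\theta'}_{b}}\sqrt{\Pi^{\theta}_{b}}\| \leq 1$. The rest of the plan targets the nontrivial bound with its $\sqrt{d}$ factor.

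The strategy is to pass from an operator norm to a trace and exploit Bob's POVM completeness. First, because $\{P^{\theta}_{x|b}\}_{x}$ are mutually orthogonal projectors summing to $S_{b}$, functional calculus yields the explicit square root
\begin{equation*}
\sqrt{\Pi^{\theta}_{b}} \;=\; \sum_{x} P^{\theta}_{x|b} \otimes \sqrt{F^{\theta}_{x}}.
\end{equation*}
Setting $M := \sqrt{\Pi^{\theta'}_{b}}\sqrt{\Pi^{\theta}_{b}}$ and using $\|M\|^{2} = \|MM^{\dagger}\| \leq \tr(MM^{\dagger})$ together with cyclicity of the trace, the operator-norm question reduces to a Hilbert--Schmidt inner product:
\begin{equation*}
\|M\|^{2} \;\leq\; \tr\bigl(\Pi^{\theta'}_{b}\Pi^{\theta}_{b}\bigr) \;=\; \sum_{x,y \in \{0,1\}^{n}} \tr\bigl(P^{\theta'}_{x|b} P^{\theta}_{y|b}\bigr)\, \tr\bigl(F^{\theta'}_{x} F^{\theta}_{y}\bigr).
\end{equation*}

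Next, I would bound each Alice-side factor using the Jordan decomposition of Step~1. Since $P^{\theta'}_{x|b} P^{\theta}_{y|b} = \bigotimes_{k} P^{\theta'_{k}}_{x_{k}|b_{k}} P^{\theta_{k}}_{y_{k}|b_{k}}$, its trace splits as a product over rounds: the $k$-th factor equals $\delta_{x_{k},y_{k}}$ when $w_{k}=0$, and lies in $\{\cos^{2}\beta_{b_{k}},\sin^{2}\beta_{b_{k}}\}$ when $w_{k}=1$. In either case it is upper bounded by $\delta_{x_{k},y_{k}}^{\,1-w_{k}} (\max\{\cos\beta_{b_{k}},\sin\beta_{b_{k}}\})^{2w_{k}}$. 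Pulling the $w$-dependent product outside the sum, the residual quantity is
\begin{equation*}
\sum_{\substack{x,y \\ x_{K_{0}} = y_{K_{0}}}} \tr\bigl(F^{\theta'}_{x} F^{\theta}_{y}\bigr) \;=\; \sum_{a}\tr\bigl(\widetilde{F}^{\theta'}_{a}\,\widetilde{F}^{\theta}_{a}\bigr),
\end{equation*}
where $K_{0} := \{k : w_{k} = 0\}$ and $\widetilde{F}^{\theta}_{a} := \sum_{x : x_{K_{0}} = a} F^{\theta}_{x}$. These marginal POVM elements satisfy $\widetilde{F}^{\theta}_{a} \leq \mathds{1}_{d}$ and $\sum_{a}\widetilde{F}^{\theta}_{a} = \mathds{1}_{d}$, so applying $\widetilde{F}^{\theta}_{a} \leq \mathds{1}_{d}$ to the second factor in each trace gives $\sum_{a}\tr\bigl(\widetilde{F}^{\theta'}_{a}\bigr) = \tr(\mathds{1}_{d}) = d$. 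Taking square roots and combining with the trivial bound completes the argument.

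I do not foresee a real obstacle: the argument is a short chain of standard trace and positivity inequalities, and its main creative input is recognizing that the relaxation $\|\cdot\|\leq\tr(\cdot)$ is already tight enough once Jordan's Lemma has reduced Alice's projectors to rank one (or zero) inside each block. The only bookkeeping subtlety is to verify that one-dimensional Jordan blocks, where $\beta_{b_{k}}\in\{0,\pi/2\}$ forces $\max\{\cos\beta_{b_{k}},\sin\beta_{b_{k}}\}=1$, are handled consistently; but in those cases the coordinate contributes trivially, matching the conventions established in Step~1.
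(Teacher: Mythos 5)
Your proof is correct, and it takes a genuinely different and shorter route than the paper's. The paper stays at the level of operator norms throughout: it splits off the coordinates where $\theta$ and $\theta'$ agree into an explicit maximisation over $\tilde{x}$, computes $M_{\tilde{x}}M_{\tilde{x}}^{\dagger}$, bounds the Bob-side blocks by their operator norms, passes to the Schatten $2$-norm, and uses submultiplicativity before invoking $\sum_{x'}\|F^{\theta'}_{\tilde{x}x'}\|\leq\tr(\mathds{1}_{d})=d$. You instead relax $\|MM^{\dagger}\|$ to $\tr(MM^{\dagger})=\tr(\Pi^{\theta'}_{b}\Pi^{\theta}_{b})$ at the outset; the Hilbert--Schmidt inner product then factorises exactly as $\sum_{x,y}\tr(P^{\theta'}_{x|b}P^{\theta}_{y|b})\,\tr(F^{\theta'}_{x}F^{\theta}_{y})$, the Alice factors are read off from the Jordan angles (rank at most one per block, so the $w_{k}=0$ coordinates contribute at most $\delta_{x_{k},y_{k}}$ and the $w_{k}=1$ coordinates at most $\max\{\cos^{2}\beta_{b_{k}},\sin^{2}\beta_{b_{k}}\}$), and POVM completeness on a $d$-dimensional space gives $\sum_{a}\tr(\widetilde{F}^{\theta'}_{a}\widetilde{F}^{\theta}_{a})\leq d$. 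Both routes land on exactly the same bound $\sqrt{d}\prod_{k}(\max\{\cos\beta_{b_{k}},\sin\beta_{b_{k}}\})^{w_{k}}$, so nothing is lost by the seemingly crude relaxation $\|\cdot\|\leq\tr(\cdot)$: the dimension factor it introduces is precisely the $\sqrt{d}$ the paper's argument also pays. Your version avoids the auxiliary norm gymnastics and is easier to verify; the paper's version keeps intermediate objects (the $M_{\tilde{x}}$ and the $(**)$ block) that mirror the structure of Lemma~\ref{lmm:normofsum} used elsewhere in Step 2. Two small points worth making explicit in a final write-up: all cross terms $\tr(P^{\theta'}_{x|b}P^{\theta}_{y|b})$ and $\tr(F^{\theta'}_{x}F^{\theta}_{y})$ are nonnegative, which is what licenses bounding the Alice factors termwise inside the double sum; and for $w_{k}=0$ the factor is $\delta_{x_{k},y_{k}}\tr(P^{\theta_{k}}_{x_{k}|b_{k}})$, which is $\leq\delta_{x_{k},y_{k}}$ rather than equal to it (it vanishes when the block-restricted projector is zero), though this only strengthens the inequality.
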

\begin{proof}
Let us begin by simplifying $\sqrt{\Pi^{\theta'}_{b}}\sqrt{\Pi^\theta_{b}}$ using the definition of $\Pi^\theta_b$ in Step 2 and orthogonality relations of $P^\theta_{x|b}$ in Step 1. Let $S=\{k\in[n]:\theta'_k=\theta_k\}$ and $T=\{k\in[n]:\theta'_k\neq\theta_k\}$ be the indices where the measurement choices agree and differ respectively. Then,
\begin{align}
\sqrt{\Pi_{b}^{\theta'}}\sqrt{\Pi^\theta_{b}} =& \sum_{x,y} P_{x|b}^{\theta'}P_{y|b}^{\theta} \otimes \sqrt{F_x^{\theta'}}\sqrt{F_y^\theta}\\
=& \sum_{x,y} \bigotimes_{k\in S} P_{x_k|b_k}^{\theta'_k}P_{y_k|b_k}^{\theta_k} \bigotimes_{k\in T} P_{x_k|b_k}^{\theta'_k}P_{y_k|b_k}^{\theta_k} \otimes \sqrt{F_x^{\theta'}}\sqrt{F_y^\theta}\\
=& \sum_{x,y} \underbrace{\bigotimes_{k\in S} \delta_{x_k,y_k}P_{y_k|b_k}^{\theta_k}}_{(\star)} \bigotimes_{k\in T} \left|x^{\theta'_k}_{k|b_k}\right>\left<x^{\theta'_k}_{k|b_k} \Big|y^{\theta_k}_{k|b_k}\right>\left<y^{\theta_k}_{k|b_k}\right| \otimes \sqrt{F_x^{\theta'}}\sqrt{F_y^\theta}
\end{align}
 where $\left|y^{\theta_k}_{k|b_k}\right>$  are the eigenvectors defined in Step 1. The notation $\left|y^{\theta_k}_{k|b_k}\right>$ should be read as the eigenvector representing the outcome $y_k\in\{0,1\}$ of the measurement $\theta_k\in\{0,1\}$ restricted to the block $b_k\in\mathcal{J}$.

The sum over $x=(x_1, ..., x_n) ,y=(y_1, ..., y_n)\in\{0,1\}^n$ can be split into a sum of variables with indices in $S$ which we will denote by $\tilde{x}, \tilde y$ and indices in $T$ denoted $x',y'$. The definition of $S$ implies that $\tilde{x}= \tilde y$. Let
\begin{align}
M_{\tilde{x}} := &\sum_{x',y'\in \{0,1\}^{|T|}} \bigotimes_{k\in T} \left|x^{\theta'_k}_{k|b_k}\right>\left<x^{\theta'_k}_{k|b_k} \Big|y^{\theta_k}_{k|b_k}\right>\left<y^{\theta_k}_{k|b_k}\right| \otimes \sqrt{F_{\tilde{x} x'}^{\theta'}}\sqrt{F_{\tilde{x} y'}^\theta},
\end{align}
where the two strings $x'$ and $x$ together form a bit string of length $n$ and give meaning to the notation $F_{\tilde{x} x'}^{\theta'}$ (same for $F_{\tilde{x} y'}^{\theta}$). Since the register labelled as $(\star)$ consists of orthogonal projectors, the desired operator norm is achieved by a maximization over $\tilde{x} \in \{0,1\}^{|S|}$. That is
\begin{equation}
\left\|\sqrt{\Pi_{b}^{\theta'}}\sqrt{\Pi^\theta_{b}}\right\| = \max_{\tilde x\in\{0,1\}^{|S|}} \left\| M_{\tilde{x}} \right\|
\end{equation}
In the following we derive an upper bound which does not depend on $\tilde{x}$. Since for $k \in T$ we have $\theta_{k} \neq \theta_{k}'$, we use Eq.~\eqref{eq:2dblock} to evaluate the inner product
\begin{equation}
\left<x^{\theta'_k}_{k|b_k} \Big|y^{\theta_k}_{k|b_k}\right> = (-1)^{x_{k} y_{k} } \cos(\beta_{b_k})^{\overline {x_k \oplus y_k}}
  \sin(\beta_{b_k})^{x_k \oplus y_k},
\end{equation}
where $\overline {x_k \oplus y_k} = 1- x_k \oplus y_k$.
Therefore,
\begin{align}
M_{\tilde{x}} = \sum_{x',y'} (-1)^{x' \cdot \bar y'} \bigotimes_{k\in T} \cos(\beta_{b_k})^{\overline {x_k \oplus y_k}}
  \sin(\beta_{b_k})^{x_k \oplus y_k}   \left|x_{k|b_k}^{\theta'_k}\right>\left<y^{\theta_k}_{k|b_k}\right|\otimes \sqrt{F_{\tilde{x}x'}^{\theta'}}\sqrt{F_{\tilde{x}y'}^\theta}.
\end{align}
where $\forall x,y \in \{0,1\}^m,\ x.y:=\sum_k x_k y_k$

From $\|M_{\tilde{x}}\|=\sqrt{\|M_{\tilde{x}}M_{\tilde{x}}^\dagger\|}$ and the definition
\begin{equation}
\label{def:f}
f(\beta_{b},x',y',z'):= (-1)^{(x' \oplus z') \cdot \bar y} \prod_{k \in T} \cos(\beta_{b_k})^{\overline {x_k \oplus y_k}+ \overline {z_k \oplus y_k}} \sin(\beta_{b_k})^{x_k \oplus y_k+ z_k \oplus y_k},
\end{equation}
where $\beta_{b}$ is a vector of angles,
we simplify $M_{\tilde{x}}M_{\tilde{x}}^\dagger$ and get
\begin{align}
\|M_{\tilde{x}}\| &= \left\|\sum_{x',z'}\bigotimes_{k \in T}\left|x^{\theta'_k}_{k|b_k}\right>\left<z^{\theta'_k}_{k|b_k}\right| \otimes \underbrace{\sqrt{F^{\theta'}_{\tilde{x}x'}} \left(\sum_{y'}  f(\beta_{b},x',y',z') F_{\tilde{x}y'}^\theta\right) \sqrt{F^{\theta'}_{\tilde{x}z'}}}_{(**)} \right\|^{1/2}.
\end{align}
Bounding the register labelled as $(**)$ by its operator norm does not decrease the norm as mentioned in Lemma \ref{lmm:normofsum}. Hence we have
\begin{align}
\|M_{\tilde{x}}\| \leq & \left\|\sum_{x',z'}\bigotimes_{k \in T}\left|x^{\theta'_k}_{k|b_k}\right>\left<z^{\theta'_k}_{k|b_k}\right| \cdot \Big\|\sqrt{F^{\theta'}_{\tilde{x}x'}} \left(\sum_{y'}  f(\beta_{b},x',y',z') F_{\tilde{x}y'}^\theta\right) \sqrt{F^{\theta'}_{\tilde{x}z'}}\Big\| \right\|^{1/2}.
\end{align}
Bounding the outer operator norm with Schatten two norm ($\|\cdot\| \leq \|\cdot\|_2$) gives 
\begin{align}
\|M_{\tilde{x}}\| \leq & \left\|\sum_{x',z'}\bigotimes_{k \in T}\left|x^{\theta'_k}_{k|b_k}\right>\left<z^{\theta'_k}_{k|b_k}\right| \cdot \Big\|\sqrt{F^{\theta'}_{\tilde{x}x'}} \left(\sum_{y'}  f(\beta_{b},x',y',z') F_{\tilde{x}y'}^\theta\right) \sqrt{F^{\theta'}_{\tilde{x}z'}}\Big\| \right\|_2^{1/2}\\
=& \left(\sum_{x',z'} \left\|   \sqrt{F^{\theta'}_{\tilde{x}x'}} \left(\sum_{y'}  f(\beta_{b},x',y',z') F_{\tilde{x}y'}^\theta\right) \sqrt{F^{\theta'}_{\tilde{x}z'}} \right\|^2 \right)^{1/4}.
\end{align}
Using submultiplicativity of the operator norm we have
\begin{align}
\|M_{\tilde{x}}\| \leq&\left(\sum_{x',z'} \left\|\sqrt{F^{\theta'}_{\tilde{x}x'}}\right\|^2 \left\|\sum_{y'}  f(\beta_{b},x',y',z') F_{\tilde{x}y'}^\theta\ \right\|^2 \left\|\sqrt{F^{\theta'}_{\tilde{x}z'}}\right\|^2 \right)^{1/4} \\ \label{bound_overlap}
=&\left(\sum_{x',z'} \left\|F^{\theta'}_{\tilde{x}x'}\right\| \underbrace{\left\|\sum_{y'}  f(\beta_{b},x',y',z') F_{\tilde{x}y'}^\theta\ \right\|^2}_{(*)} \left\|F^{\theta'}_{\tilde{x}z'}\right\| \right)^{1/4}.
\end{align}
From the definition of $f(\beta_{b},x',y',z')$ in~\eqref{def:f} it is easy to see that
\begin{align}
| f(\beta_{b},x,y,z) | \leq \prod_{k \in T} \max\{\cos^2\beta_{b_{k}},\sin^2\beta_{b_{k}}\}.
\end{align}
Since this bound does not depend on $y'$ we can take it out of the sum
\begin{align}
(*) &\leq \prod_{k \in T} \max\{\cos^4\beta_{b_{k}},\sin^4\beta_{b_{k}}\} \left\|\sum_{y} F_{\tilde{x}y'}^\theta \right\|^2 \\
&\leq \prod_{k \in T} \max\{\cos^4\beta_{b_{k}},\sin^4\beta_{b_{k}}\}.
\end{align}
The latter inequality holds because $\sum_y F_{\tilde{x} y}^\theta \leq \mathds{1}$. Using this bound in \eqref{bound_overlap} gives us,
\begin{align}
  \left\|\sqrt{\Pi^{\theta'}_{b}}\sqrt{\Pi^\theta_{b}} \right\| &\leq \prod_{k \in T}\max\{\cos\beta_{b_{k}},\sin\beta_{b_{k}}\} \left[\sum_{x'z'} \|F^{\theta'}_{\tilde{x}x'} \| \|F^{\theta'}_{\tilde{x}z'}\|\right]^{1/4}\\
  &= \prod_{k \in T}\max\{\cos\beta_{b_{k}},\sin\beta_{b_{k}}\}\left[\sum_{x} \|F^{\theta'}_{\tilde{x}x'} \|\right]^{1/2} \\
  &\leq \sqrt{d} \prod_{k \in T}\max\{\cos\beta_{b_{k}},\sin\beta_{b_{k}}\}
\end{align}
where in the last inequality we use the observation that for all $\tilde{x}\in\{0,1\}^{n-t}$
\begin{align}
\sum_{x'} \| F^{\theta'}_{\tilde{x}x'} \| \leq \sum_x \tr(F^{\theta'}_{\tilde{x}x'})= \tr\left(\sum_x F^{\theta'}_{\tilde{x}x'}\right) \leq \tr(\mathds{1}_d) = d\,
\end{align}
since Bob's quantum memory is of dimension at most $d$.
Combining this bound with the trivial bound $\left\|\sqrt{\Pi^{\theta'}_{b}}\sqrt{\Pi^\theta_{b}} \right\|  \leq 1$ completes the proof.
\end{proof}

\begin{lmm}
\label{simp_nrm_bound}
For all $\theta', \theta \in \{0, 1\}^{n}$ and $b \in \mathcal{J}^{n}$, we have
\begin{equation}
\left\|\sqrt{\Pi^{\theta'}_{b}}\sqrt{\Pi^\theta_{b}} \right\| \leq \min\left\{1,\sqrt{d} \prod_{k=1}^n\left(\frac{1+\epsilon_{b_k}}{2}\right)^{w_k/2} \right\},
\end{equation}
where $d$ is the dimension of Bob's memory and $w:=\theta \oplus \theta' \in \{0,1\}^n$
\end{lmm}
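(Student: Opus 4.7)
The plan is to derive Lemma \ref{simp_nrm_bound} directly from Lemma \ref{lmm:normbnd} by rewriting $\max\{\cos \beta_{b_k}, \sin \beta_{b_k}\}$ in terms of the per-block absolute effective anticommutator $\epsilon_{b_k} := |\cos(2 \beta_{b_k})|$ defined in Step 1 (see equation \eqref{eq:epsilonplus}). The only actual content is a trigonometric identity; no further operator-theoretic work is needed.

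Concretely, I would proceed as follows. First, I would invoke Lemma \ref{lmm:normbnd} to get the bound
\begin{equation}
\left\|\sqrt{\Pi^{\theta'}_{b}}\sqrt{\Pi^{\theta}_{b}}\right\| \leq \min\!\left\{1, \sqrt{d} \prod_{k=1}^{n} \bigl(\max\{\cos \beta_{b_k}, \sin \beta_{b_k}\}\bigr)^{w_k}\right\}.
\end{equation}
Next, I would observe that, using the double angle formulas $\cos^{2}\beta = \tfrac{1 + \cos(2\beta)}{2}$ and $\sin^{2}\beta = \tfrac{1 - \cos(2\beta)}{2}$, one has for every $\beta \in [0, \pi/2]$
\begin{equation}
\max\{\cos^{2}\beta, \sin^{2}\beta\} = \frac{1 + |\cos(2\beta)|}{2}.
\end{equation}
Applied to $\beta = \beta_{b_k}$ and recalling that $\epsilon_{b_k} = |\cos(2 \beta_{b_k})|$, this gives
\begin{equation}
\max\{\cos \beta_{b_k}, \sin \beta_{b_k}\} = \sqrt{\frac{1 + \epsilon_{b_k}}{2}},
\end{equation}
so that $\bigl(\max\{\cos \beta_{b_k}, \sin \beta_{b_k}\}\bigr)^{w_k} = \bigl(\tfrac{1 + \epsilon_{b_k}}{2}\bigr)^{w_k/2}$ for both $w_k \in \{0,1\}$ (the case $w_k = 0$ being trivial since both sides equal $1$).

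Finally, substituting this equality into the bound from Lemma \ref{lmm:normbnd} yields exactly
\begin{equation}
\left\|\sqrt{\Pi^{\theta'}_{b}}\sqrt{\Pi^{\theta}_{b}}\right\| \leq \min\!\left\{1, \sqrt{d} \prod_{k=1}^{n} \left(\frac{1 + \epsilon_{b_k}}{2}\right)^{w_k/2}\right\},
\end{equation}
which is the claim of Lemma \ref{simp_nrm_bound}. There is no real obstacle here: the lemma is simply a reparametrization that re-expresses the geometric quantity $\max\{\cos\beta, \sin\beta\}$ appearing in the Jordan decomposition in terms of the CHSH-estimable quantity $\epsilon_{b_k}$, readying the bound for the averaging step (summation over $\theta$ and over blocks $b$) that will follow in Step~4.
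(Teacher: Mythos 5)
Your proposal is correct and follows essentially the same route as the paper: both reduce the lemma to the identity $\max\{\cos\beta_{b_k},\sin\beta_{b_k}\}=\sqrt{(1+\epsilon_{b_k})/2}$ (the paper verifies it by splitting into the cases $\beta_{b_k}\in[0,\pi/4]$ and $\beta_{b_k}\in\,]\pi/4,\pi/2]$, which is equivalent to your use of the double-angle formulas with the absolute value) and then substitute into Lemma~\ref{lmm:normbnd}. No gaps.
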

\begin{proof}
Recall that for all $k\in[n]$ and $b\in \mathcal{J}^n$
\begin{equation}
    \epsilon_{b_k}:=|\cos2\beta_{b_k}|=\begin{cases}
    \cos2\beta_{b_k} &\text{ if } \beta_{b_k}\in[0,\pi/4] \\
    -\cos2\beta_{b_k} &\text{ if } \beta_{b_k}\in]\pi/4,\pi/2]
    \end{cases}
\end{equation}
\begin{itemize}
\item If $\beta_{b_k}\in[0,\pi/4]$ then $\cos\beta_{b_{k}}\geq\sin\beta_{b_{k}}$ and
    \begin{align}\max\{\cos\beta_{b_{k}},\sin\beta_{b_{k}}\}=\cos\beta_{b_{k}}= \sqrt{\frac{1+\cos 2 \beta_{b_k}}{2}}= \sqrt{\frac{1+\epsilon
    _{b_k}}{2}}. \end{align}
\item Similarly, if $\beta_{b_k}\in]\pi/4,\pi/2]$ then $\sin \beta_{b_k} \geq \cos \beta_{b_k}$ and
    \begin{align}\max\{\cos\beta_{b_{k}},\sin\beta_{b_{k}}\}=\sin\beta_{b_{k}}= \sqrt{\frac{1-\cos 2 \beta_{b_k}}{2}}= \sqrt{\frac{1+\epsilon
    _{b_k}}{2}}. \end{align}
\end{itemize}
Plugging
\begin{equation}
    \max\{\cos\beta_{b_{k}},\sin\beta_{b_{k}}\}=\sqrt{\frac{1+\epsilon_{b_k}}{2}},
\end{equation}
into Lemma~\ref{lmm:normbnd} completes the proof.
\end{proof}

\subsection*{Step 4: Completing the proof}
\label{sec:completing}

In this section we first we bound Bob's winning probability in terms of the absolute anti commutator $\epsilon_+$, and then bound it by a quantity that Alice can evaluate experimentally since it is a function of the Bell violation she estimates during the testing phase.

We are now in a position to relate the guessing probability with the average incompatibility $\epsilon_+$. That's to say we show that,
\begin{lmm} In~\eqref{1_step_bound} we bounded the winning probability in terms of $\Lambda$, which can be further bounded as follows
    \begin{align}
    \Lambda=2^{-n} \sum_{k,b}p_kp_{b|k} \max_{\theta'}  \sum_\theta   \left\|\sqrt{\Pi^{\theta',k}_{b}}\sqrt{\Pi^{\theta,k}_{b}} \right\| \leq 2^{-n}\sum_{w} \min \left(1,g\left(\vec{\epsilon}_+,w\right)\right),
    \end{align}
    where $w:=\theta'\oplus\theta\in\{0,1\}^n$, $\epsilon_+ = \sum_{j \in \mathcal{J}} p_j \epsilon_j$, $\vec \epsilon_+$ is the vector $(\epsilon_+,\epsilon_+,\ldots,\epsilon_+)$, and \[g(\vec a, w):=\sqrt{d}\prod_{k = 1}^{n} \left(\frac{1 + a_k}{2} \right)^{w_{k}/2},\]
    where $\vec a$ is a vector $(a_1,\ldots, a_k, \ldots, a_n)$.
    
\end{lmm}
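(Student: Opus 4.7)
(sketch)
The plan is to pass from the operator-norm bound of Lemma~\ref{simp_nrm_bound} to the claimed inequality in four steps: handle the inner $\max_{\theta'}\sum_\theta$ by a change of variable, exploit the i.i.d.~marginal on Alice to turn the average over $(k,b)$ into a product measure, apply Jensen's inequality block by block to reduce $\vec\epsilon_b$ to $\vec\epsilon_+$, and finally use a simple ``min of averages'' argument to keep the trivial bound by $1$.

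First, I invoke Lemma~\ref{simp_nrm_bound}, noting crucially that the right-hand side depends only on $w:=\theta'\oplus\theta$ and the block indices $b$, but \emph{not} on $k$, the measurement operators $\mathcal{F}^{\theta,k}$, nor on $\theta,\theta'$ separately. For each fixed $\theta'$, the map $\theta\mapsto w=\theta'\oplus\theta$ is a bijection of $\{0,1\}^n$, so $\sum_\theta\|\sqrt{\Pi^{\theta',k}_b}\sqrt{\Pi^{\theta,k}_b}\|\leq\sum_w\min\!\bigl(1,g(\vec\epsilon_b,w)\bigr)$, and this upper bound is independent of $\theta'$, which trivializes the outer $\max_{\theta'}$ and also eliminates the inner $\max_{\{\mathcal{F}^{\theta,k}\}}$. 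Hence
\begin{equation}
\Lambda\ \leq\ 2^{-n}\sum_{w\in\{0,1\}^n}\sum_{k,b}p_k p_{b|k}\,\min\!\bigl(1,g(\vec\epsilon_b,w)\bigr).
\end{equation}

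Second, I use the i.i.d.~marginal on Alice to simplify $\sum_k p_k p_{b|k}$. Since $p_{b|k}=\tr(S_b\rho_A^k)$ and $\sum_k p_k\rho_A^k=\sigma_A^{\otimes n}$, we get $\sum_k p_k p_{b|k}=\tr(S_b\,\sigma_A^{\otimes n})=\prod_{i=1}^n p_{b_i}$, with $p_j=\tr(S_j\sigma_A)$. Thus the average over $(k,b)$ collapses to the product measure $P(b):=\prod_i p_{b_i}$ on $\mathcal{J}^n$, a key consequence of the i.i.d.~assumption.

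Third, for each fixed $w$ I control $\sum_b P(b)\,g(\vec\epsilon_b,w)$. Since $g(\vec\epsilon_b,w)=\sqrt{d}\prod_{k=1}^n\bigl((1+\epsilon_{b_k})/2\bigr)^{w_k/2}$ factors across the coordinates and $P(b)$ is a product measure, the expectation factors as well:
\begin{equation}
\sum_b P(b)\,g(\vec\epsilon_b,w)
=\sqrt{d}\prod_{k=1}^n\Bigl(\sum_{b_k}p_{b_k}\bigl((1+\epsilon_{b_k})/2\bigr)^{w_k/2}\Bigr).
\end{equation}
For $w_k=0$ the inner factor is $1$; for $w_k=1$ I apply Jensen's inequality to the concave function $x\mapsto\sqrt{(1+x)/2}$ to get $\sum_{b_k}p_{b_k}\sqrt{(1+\epsilon_{b_k})/2}\leq\sqrt{(1+\epsilon_+)/2}$, using $\epsilon_+=\sum_j p_j\epsilon_j$ from \eqref{eq:epsilonplus}. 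This yields $\sum_b P(b)\,g(\vec\epsilon_b,w)\leq g(\vec\epsilon_+,w)$.

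Finally, because $\min(1,\cdot)$ is concave, and $\sum_b P(b)=1$,
\begin{equation}
\sum_b P(b)\,\min\!\bigl(1,g(\vec\epsilon_b,w)\bigr)\ \leq\ \min\!\Bigl(1,\ \sum_b P(b)\,g(\vec\epsilon_b,w)\Bigr)\ \leq\ \min\!\bigl(1,g(\vec\epsilon_+,w)\bigr),
\end{equation}
and summing over $w$ gives the claim. The main subtlety is Step~3: only the i.i.d.~structure makes $\sum_k p_k p_{b|k}$ factorize, which is exactly what allows the per-block Jensen reduction to go through coordinate by coordinate; without it one would be stuck with a joint average over correlated $\epsilon_{b_1},\dots,\epsilon_{b_n}$ and could not isolate the single scalar $\epsilon_+$.
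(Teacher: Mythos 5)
Your proposal is correct and follows essentially the same route as the paper: apply Lemma~\ref{simp_nrm_bound}, note the bound depends only on $w=\theta'\oplus\theta$ and not on $k$ or $\theta'$ so the maximizations trivialize, use the i.i.d.~marginal to factorize $\sum_k p_k p_{b|k}=\prod_i p_{b_i}$, pull the average over $b$ inside $\min(1,\cdot)$, and bound each coordinate factor via concavity of $x\mapsto\sqrt{(1+x)/2}$ to replace $\epsilon_{b_k}$ by $\epsilon_+$. The only difference is a harmless reordering of the Jensen step and the min-of-averages step.
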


\begin{proof}
Define
\begin{align}
g(\epsilon_b,w) = \sqrt{d}\prod_{k = 1}^{n} \left(\frac{1 + \epsilon_{b_k}}{2} \right)^{w_{k}/2}, \, \, w:=\theta'\oplus\theta\in\{0,1\}^n. \label{eq:def_f}
\end{align}
When we apply Lemma~\ref{simp_nrm_bound} we get the bound
\begin{align}
    \max_{\theta'}2^{-n}\sum_\theta   \left\|\sqrt{\Pi^{\theta',k}_{b}}\sqrt{\Pi^{\theta,k}_{b}} \right\| \leq 2^{-n} \max_{\theta'}\sum_{\theta} \min\big(1, g(\epsilon_b,w) \big).
\end{align}
From \eqref{eq:def_f}, we observe that
\begin{align}
    \max_{\theta'} \sum_\theta \min\big(1, g(\epsilon_b,w) \big) = \sum_w \min\big(1, g(\epsilon_b,w) \big)
\end{align}
because the objective function to be maximized is independent of $\theta'$. 
The objective function in the optimization of~\eqref{1_step_bound} can be bounded as 
\begin{align}
\sum_{k,b} p_kp_{b|k} &\max_{\theta'} \sum_\theta 2^{-n} \left\|\sqrt{\Pi^{\theta',k}_{b}}\sqrt{\Pi^{\theta,k}_{b}}\right\|\\
&\leq \sum_{k,b} p_kp_{b|k} 2^{-n} 
\sum_{w} \min\big(1, g(\epsilon_b,w) \big),
\end{align}
where the inner expression is independent of $k$. This is the uniform bound we mentioned. Performing the sum over $k$ first gives
\begin{align}
    \sum_k p_kp_{b|k}  &\underset{\eqref{eq:p_b|k}}{=} \sum_k p_k \tr(S_b\otimes \mathds{1}_{B'}\rho^k_{AB'}S_b\otimes \mathds{1}_{B'}) \\&\underset{\phantom{\eqref{eq:p_b|k}}}{=} \tr(S_b\otimes \mathds{1}_{B'K}\ \rho_{AB'K}\ S_b\otimes \mathds{1}_{B'K})\\
    &\underset{\phantom{\eqref{eq:p_b|k}}}{=} \tr(S_b\sigma_A^{\otimes n}S_b) = \prod_{k=1}^n \tr(S_{b_k}\sigma_AS_{b_k}) = \prod_{k=1}^np_{b_k} =: p_b. \label{prod_p_b}
\end{align}
where $p_{b_k}$ for $b_k\in\mathcal{J}$ has been defined before~\eqref{eq:epsilonplus}.

Hence we see explicitly that while the attack of Bob may induce $p_{b|k}$ non-i.i.d.~for some $k$, on average he cannot influence Alice's local i.i.d.~state and therefore $p_{b}$ remains i.i.d.~

Swapping the order of summation over $b$ and $w$ and pulling the summation over $b$ inside the minimum (which can only increase the value) gives the upper bound
\begin{align}
2^{-n}&\sum_{b,w}p_{b} \min\big(1, g(\epsilon_b,w) \big) \\
&\leq 2^{-n}\sum_{w} \min \Big(1,\sum_{b} p_{b} g(\epsilon_b,w)\Big)
\end{align}
The sum inside the minimum,
\begin{align}
  \sum_b p_b g(\epsilon_b,w)= \sum_b p_b \sqrt{d}\prod_{k = 1}^{n} \left(\frac{1 + \epsilon_{b_k}}{2} \right)^{w_{k}/2},
\end{align}
is a product of sums because $p_b$ is a product (see eq.~\ref{prod_p_b}):
\begin{align}
  \sum_b p_b g(\epsilon_b,w)= \sqrt{d}\prod_{k = 1}^{n} \sum_{j \in \mathcal{J}} p_j \left(\frac{1 + \epsilon_{j}}{2} \right)^{w_{k}/2}.
\end{align}
Now each sum in the product can be bounded because of the concavity of the square root we get
\begin{align}
  \sum_b p_b g(\epsilon_b,w)&\leq \sqrt{d}\prod_{k = 1}^{n}  \left(\frac{1 + \epsilon_{+}}{2} \right)^{w_{k}/2}=g\left(\vec{\epsilon}_+,w\right)
\end{align}
hence we have
\begin{align}
&2^{-n}\sum_{b,w}p_{b} \min\big(1, g(\epsilon_b,w) \big)\\
&\leq 2^{-n}\sum_{w} \min \left(1,g\left(\vec{\epsilon}_+,w\right)\right) \label{eq:objective_bound}
\end{align}
where $\epsilon_+ = \sum_{j \in \mathcal{J}} p_j \epsilon_j$  and $\vec \epsilon_+$ is the vector $(\epsilon_+,\epsilon_+,\ldots,\epsilon_+)$
\end{proof}

The following Lemma forms the main result of this appendix. It bounds the winning probability  $\lambda(n,d,\zeta)$ by a function of $d$ and $\zeta$
\begin{lmm}
\label{lmm:main_tomo}
In the perfect guessing game where Alice's devices behave i.i.d.~and Bob has a quantum memory of dimension $d$, his winning probability is bounded by
\begin{align}
    \lambda(n,d,\zeta) \leq  2^{-n}\left[\sum_{k = 0}^{t} \binom{n}{k} + \sqrt{d} \sum_{k = t + 1}^{n} \binom{n}{k} \left(\frac{1+\zeta}{2} \right)^{k/2}\right],
\end{align}
where $t$ is the threshold defined as
\begin{align}
t := \left\lfloor - \log d \cdot \left[ \log \left(\frac{1+\zeta}{2}\right) \right]^{-1} \right\rfloor
\end{align}
and $\zeta:=\frac{S}{4}\sqrt{8-S^2}$  with $S$ being the \textup{CHSH} violation as defined in Lemma \ref{Bound_Eps_+}.
\end{lmm}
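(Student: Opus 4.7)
My plan is to start from the bound already established in equation \eqref{eq:objective_bound}, namely
\begin{align}
\lambda(n,d,\zeta) \leq 2^{-n}\sum_{w\in\{0,1\}^n} \min\bigl(1, g(\vec{\epsilon}_+, w)\bigr),
\end{align}
and turn the right-hand side into the explicit combinatorial expression in the statement. The first observation is that $g(\vec{\epsilon}_+, w) = \sqrt{d}\prod_{k=1}^n\bigl((1+\epsilon_+)/2\bigr)^{w_k/2} = \sqrt{d}\bigl((1+\epsilon_+)/2\bigr)^{w_H(w)/2}$, so the summand depends on $w$ only through the Hamming weight $w_H(w)$. Next, since $(1+a)/2 \leq 1$ for $a \in [0,1]$, the function is monotonically increasing in $a$, so I can apply Lemma~\ref{Bound_Eps_+} ($\epsilon_+ \leq \zeta$) to upper bound each summand by $\min\bigl(1, \sqrt{d}\bigl((1+\zeta)/2\bigr)^{w_H(w)/2}\bigr)$.

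The key analytical step is then to find, for fixed $d$ and $\zeta$, the threshold Hamming weight at which the expression under the $\min$ crosses $1$. Solving $\sqrt{d}\bigl((1+\zeta)/2\bigr)^{k/2} = 1$ gives $k = -\log d \bigl/ \log\bigl((1+\zeta)/2\bigr)$. The sign manipulation here requires a bit of care: since $\zeta \in [0,1]$ implies $\log\bigl((1+\zeta)/2\bigr) \leq 0$, dividing by it flips the inequality, so the $\min$ equals $1$ precisely when $w_H(w) \leq t$, with $t$ as defined in the statement, and equals $\sqrt{d}\bigl((1+\zeta)/2\bigr)^{w_H(w)/2}$ otherwise.

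Finally, I would group the sum over $w\in\{0,1\}^n$ by Hamming weight, noting that there are exactly $\binom{n}{k}$ strings of weight $k$. This directly yields
\begin{align}
\lambda(n,d,\zeta) \leq 2^{-n}\left[\sum_{k=0}^{t} \binom{n}{k} + \sqrt{d}\sum_{k=t+1}^{n}\binom{n}{k}\left(\frac{1+\zeta}{2}\right)^{k/2}\right],
\end{align}
which is exactly the claimed bound. I do not expect a serious obstacle here: everything of substance (Jordan's Lemma, the operator-norm bound, the reduction to $\epsilon_+$, and the concavity argument that produces $\vec{\epsilon}_+$) has already been done in Steps 1--4. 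The proof at this stage is essentially a bookkeeping exercise, and the only subtle point is keeping track of the sign of $\log\bigl((1+\zeta)/2\bigr)$ when inverting the inequality that defines $t$.
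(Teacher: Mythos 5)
Your proposal is correct and follows essentially the same route as the paper's proof: start from the bound in \eqref{eq:objective_bound}, apply $\epsilon_+ \leq \zeta$ from Lemma~\ref{Bound_Eps_+}, and carry out the minimization explicitly by grouping $w$ according to Hamming weight, with the threshold $t$ determined by where $\sqrt{d}\left(\frac{1+\zeta}{2}\right)^{k/2}$ crosses $1$. The only difference is that you spell out the sign-flip in deriving $t$, which the paper leaves as "easy to perform explicitly."
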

\begin{proof}
Combine~\eqref{eq:objective_bound} with~\eqref{eq:def_f} and~\eqref{1_step_bound} and note that the maximizations over all strategies of Bob drop out because we have bounded the winning probability of an arbitrary strategy. Therefore, we obtain
\begin{equation}
\lambda(n,d,\zeta) \leq 2^{-n}\sum_w \min\left\{1,\sqrt{d}\prod_{k = 1}^{n} \left(\frac{1 + \epsilon_+}{2} \right)^{w_{k}/2}\right\}.
\end{equation}
Using Lemma \ref{Bound_Eps_+} we have $\epsilon_+\leq \zeta$ and then
\begin{equation}
\lambda(n,d,\zeta) \leq 2^{-n}\sum_w \min\left\{1,\sqrt{d}\prod_{k = 1}^{n} \left(\frac{1 + \zeta}{2} \right)^{w_{k}/2}\right\}.
\end{equation}
Since the right-hand side depends only on the Hamming weight of $w\in\{0,1\}^n$ it is easy to perform the minimization explicitly, which yields
%
\begin{align}
\lambda(n,d,\zeta) \leq 2^{-n} \left[\sum_{k = 0}^{t} \binom{n}{k} + \sqrt{d} \sum_{k = t + 1}^{n} \binom{n}{k} \left(\frac{1+\zeta}{2} \right)^{k/2}\right],
\end{align}
 where $t$ is the threshold defined in the Lemma.
\end{proof}

\end{document}